\def\@email#1#2{%
 \endgroup
 \patchcmd{\titleblock@produce}
  {\frontmatter@RRAPformat}
  {\frontmatter@RRAPformat{\produce@RRAP{*#1\href{mailto:#2}{#2}}}\frontmatter@RRAPformat}
  {}{}
}%
\let\newfloat\newfloat@ltx
\newcommand{\bs}[1]{\boldsymbol{#1}}
\newtheorem{theorem}{Theorem}
\newtheorem{lemma}[theorem]{Lemma}
\newcommand{\tr}{\mathrm{Tr}}
\newcommand{\n}{\text{nat}}
\newcommand{\ctr}{\text{ctr}}
\newcommand{\Loss}{\mathcal{L}}
\newcommand{\U}{U(\bs{\theta})}
\newcommand{\LE}{\mathcal{L}_E}
\begin{document}
\preprint{AIP/123-QED}

\title{Circumventing Traps in Analog Quantum Machine Learning Algorithms Through Co-Design} % Force line breaks with \\

\author{Rodrigo Araiza Bravo$^\ddagger$}
\email{oaraizabravo@g.harvard.edu}
\affiliation{Department of Physics, Harvard University, Cambridge, MA 02138 USA}

\author{Jorge Garcia Ponce$^\ddagger$}
\email{jorgegarciaponce@college.harvard.edu}
 \affiliation{Department of Physics, Harvard University, Cambridge, MA 02138 USA}
 
\author{Hong-Ye Hu}%
\affiliation{Department of Physics, Harvard University, Cambridge, MA 02138 USA}

\author{Susanne F. Yelin}%
\affiliation{Department of Physics, Harvard University, Cambridge, MA 02138 USA}

\date{\today}% It is always \today, today,
             %  but any date may be explicitly specified

\begin{abstract}
Quantum machine learning QML algorithms promise to deliver near-term, applicable quantum computation on noisy, intermediate-scale systems. While most of these algorithms leverage quantum circuits for generic applications, a recent set of proposals, called analog quantum machine learning (AQML) algorithms, breaks away from circuit-based abstractions and favors leveraging the natural dynamics of quantum systems for computation, promising to be noise-resilient and suited for specific applications such as quantum simulation. Recent AQML studies have called for determining best ansatz selection practices and whether AQML algorithms have trap-free landscapes based on theory from quantum optimal control (QOC). We address this call by systematically studying AQML landscapes on two models: those admitting black-boxed expressivity and those tailored to simulating a specific unitary evolution. Numerically, the first kind exhibits local traps in their landscapes, while the second kind is trap-free. However, both kinds violate QOC theory's key assumptions for guaranteeing trap-free landscapes. We propose a methodology to co-design AQML algorithms for unitary evolution simulation using the ansatz's Magnus expansion. We show favorable convergence in simulating dynamics with applications to metrology and quantum chemistry. We conclude that such co-design is necessary to ensure the applicability of AQML algorithms.
\end{abstract}

\maketitle
\def\thefootnote{$\ddagger$}\footnotetext{These authors contributed equally to this work.}\def\thefootnote{\arabic{footnote}}

\section{Introduction}\label{sec:Introduction}
\begin{figure*}[ht!]
    \centering
    \includegraphics[scale=0.5]{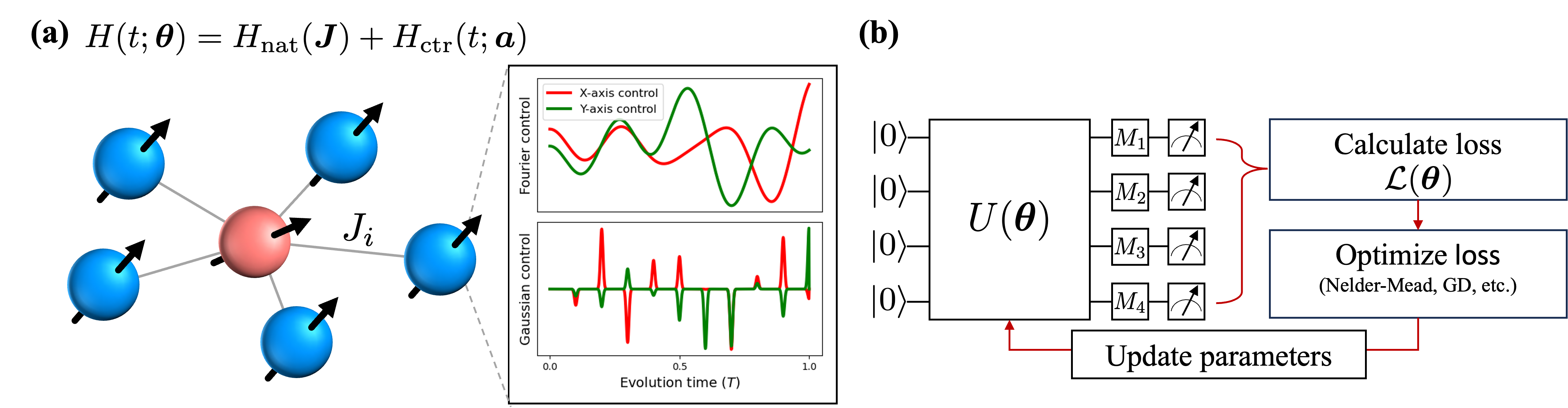}
    \caption{Schematic representation of the AQML Hamiltonian (a) and the learning procedure (b). The analog ansatz's Hamiltonian generally comprises a native Hamiltonian with qubit interactions, parametrized by variables $\bs{J}$,  and a control Hamiltonian of single-qubit terms parametrized by variables $\bs{a}$. The ansatz parameters are $\bs{\theta}=(\bs{J}, \bs{a})$. Panel \textbf{(a)} exemplifies the analog ansatz of a quantum perceptron (QP) composed of $N-1$ qubit interacting with a central qubit. Each qubit experiences several time-dependent control fields sums of simple time-dependent functions such as Fourier and Gaussian functions.
    Panel \textbf{(b)} exemplifies the learning loop with an AQML algorithm. The system undergoes unitary evolution dictated by the time-dependent Hamiltonian with parameters $\bs{\theta}$. The evolved state is then measured to calculate a loss function depending on the variational parameters. A classical optimizer--such as the Nelder-Mead or gradient descent methods--calculates parameters that reduce the loss. The parameters are updated, and the training loop continues until convergence.}
    \label{fig:Schematic}
\end{figure*}

Quantum machine learning (QML) promises to deliver advantageous applications in noisy, intermediate-scale quantum computers. QML algorithms promise to deliver either by using quantum systems to speed up machine learning subroutines~\cite{harrow2009quantum,biamonte2017quantum} or by designing novel techniques to optimize and control noisy quantum systems for machine learning and quantum applications~\cite{cerezo2021variational}. This later approach, called variational quantum algorithms, uses variational optimization for application within near-term devices. These algorithms often comprise an underlying hardware architecture with tunable parameters, a loss function measuring the error relative to a desired computation, and a classical optimizer routine tuning the parameters to minimize the loss. The hardware architecture is often abstracted away and modeled as a digital quantum circuit, and such abstracted algorithms are called variational quantum circuits (VQCs). \par

Despite their promise, VQCs exhibit issues with accuracy, efficiency, and training, which precludes an advantage over classical algorithms. VQCs often experience flat landscapes upon random initialization~\cite{mcclean2018barren}, -- a phenomenon known as \textit{barren plateaus} -- and an exponential number of local minima~\cite{anschuetz2022quantum}. To circumvent these challenges, extensive work has been done on proposing circuit architectures (ans\"atze)~\cite{peruzzo2014variational, farhi2016quantum,beer2020training,yuan2021quantum,martyn2019product}, loss functions \cite{kiani2022learning,larose2019variational,lloyd2020quantum, uvarov2021barren}, regulation techniques~\cite{patti2021entanglement,kobayashi2022overfitting,kuroiwa2021penalty}, and optimization techniques~\cite{chen2022variational, stokes2020quantum, huang2020empirical, bonet2023performance}. \par 

A recent set of proposals, called analog quantum machine learning (AQML) ans\"atze~\cite{tangpanitanon2020expressibility,markovic2020quantum,mujal2021opportunities}, breaks away from circuit-based abstractions. Instead, AQML favors directly using the system's dynamics for computation. At a high level, an AQML ansatz comprises a native interaction Hamiltonian and a set of time-dependent controls. However, AQML studies suffer from various practical drawbacks, such as the fact that simulating time evolution is computationally expensive and, therefore, limits theoretical studies to small system sizes. \par 

AQML appeals to developers for various reasons related to classical analog computation. Firstly, classical analog computing is sometimes advantageous over digital computing for specific applications due to ease of fabrication, problem-solution co-design, and energy cost~\cite{schuman2022opportunities,maclennan2014promise,markovic2020physics}. AQML is thus potentially beneficial for applications related to the simulation of other quantum behavior or for classical problems that may benefit from quantum-like computation. Secondly, analog classical computing can also be noise-resilient upon a judicious choice of learning paradigm~\cite{wunderlich2019demonstrating}. AQML algorithms are potentially resilient to current noise in quantum devices and can thus deliver practical applications sooner. \par 

Recent observations and desires within quantum computation also justify investments in AQML research. Firstly, analog approaches promise to clarify the fundamental computational capabilities of available physical systems~\cite{bravo2022quantum,PhysRevLett.127.100502}. Secondly, small-scale analog systems have been used for various tasks relevant to scientific and industrial communities, including efficient unitary time-evolution simulation~\cite{ghosh2021realising}, sensing~\cite{bravo2022universal}, resource allocation optimization~\cite{leng2022differentiable}, time-series prediction~\cite{mujal2023time}, image classification~\cite{kornjavca2024large}, and memory storage and retrieval~\cite{bravo2022quantum}. Thirdly, while VQCs are expressive, they are also plagued with barren plateaus. On the other hand, for AQML algorithms, the source of barren plateaus stems from excess entanglement~\cite{patti2021entanglement,bravo2022universal, sauvage2024building, park2024hamiltonian,park2024hardware}. Lastly, For both unitary evolution~\cite{kiani2020learning} and ground state finding~\cite{wiersema2020yvette}, shallow VQCs often exhibit low accuracy and exponential local minima~\cite{you2021exponentially}. VQCs more often succeed when the circuit depth is exponential in the system size~\cite{anschuetz2022quantum,liu2023analytic, you2022convergence}. On the other hand, AQML algorithms' landscapes are believed to resemble those from controllable quantum systems more closely. According to quantum optimal control (QOC) theory, these systems exhibit trap-free landscapes~\cite{arenz2014control, arenz2020drawing, wu2012singularities}. Recent AQML studies and perspectives have called for a study of the training landscapes of AQML models~\cite{leng2022differentiable,magann2021pulses}. There is a dire need to determine the topography of the landscapes.  Which problems are suitable for AQML algorithms and what algorithms are ideal for specific problems remain open lines of inquiry.\par 

This work leverages various numerical and analytical studies to interrogate and understand AQML landscapes as a function of the number of control parameters, control types, and system size for a specific loss function. To do this, we focus on simulating the transverse-field Ising model evolution with two groups of ans\"atze: (A1) ans\"atze that have native Hamiltonian different with the Ising interactions but admit black-box universal quantum computation such as the recently introduced quantum perceptrons (QPs)~\cite{bravo2022universal}, and (A2) ans\"atze tailored to the specific tasks such as those with Ising native interactions, or specialized QPs that contain the Ising interaction as a special case (see Sec.~\ref{Sec:Groups} for details). ans\"atze in A1 are expressive, while those in A2 are co-designed with the task we study. Co-design is not a new concept; significant effort has gone into describing how quantum computing architectures are often built to address specific problems \cite{tomesh2021quantum, jiang2021co, zhao2024unraveling}. \par 

In Sec.~\ref{Sec:QOC}, we show that using QOC theory with common QOC assumptions leads to a theoretical prediction of trap-free landscapes when an exponential number of control parameters are available (i.e., in over-parametrized ans\"atze) regardless of the microscopic ansatz details. We also discuss conditions on AQML ans\"atze that lead to a violation of one QOC assumption, \textit{local surjectivity}. Consequently, QOC theory does not apply to AQML ans\"atze. However, numerical experiments show that A2 ans\"atze exhibit trap-free landscapes and better approximation errors while those in A1 are trap-ridden (Sec.~\ref{Sec:Traps}). In summary, we argue that although most AQML landscapes do not possess the amenable qualities of QOC settings, co-designing an ansatz is an important step towards favorable training landscapes.  \par 

In Sec.~\ref{Sec:Codesign}, we show that task-algorithm co-design can be developed for unitary time evolution simulation by analyzing the ansatz's Magnus expansion. We do this analysis pictorially using a generalization of the Lie trees used in optimal quantum control~\cite{arenz2014control}. We show that when co-designed, the QP ansatz is suited for time-reversible spin-squeezing with metrology applications (see Sec.~\ref{Sec:Squeezing}), and the Ising ansatz is suited for unitary couple-cluster evolution with applications to quantum simulation (see Sec.~\ref{Sec:JordanWigner}). In both cases, satisfactory accuracy is reached within a constant number of controllable parameters. Sec.~\ref{Sec:Conclusion} concludes with further research directions. \par 

Upfront, our studies make two significant conclusions: 
\begin{enumerate}
    \item AQML algorithms violate common QOC assumptions. Despite not following QOC theory, co-designed AQML algorithms exhibit trap-free landscapes. On the other hand, black-box algorithms exhibit traps. Further work is needed to determine if co-design is necessary for trap-free landscapes. 
    \item Succeeding at a task is a more likely outcome when AQML algorithms are co-designed. In the case of unitary time evolution, one approach towards co-design is given by analyzing the terms of the AQML algorithm's Magnus expansion. 
\end{enumerate}

\section{Analog Quantum Machine Learning Algorithms}\label{Sec:Algorithm}
This section introduces AQML algorithms comprising an analog ansatz, a loss function, and a classical optimization scheme. We also introduce two classes of algorithms: black-box expressive ones and task co-designed ones. This distinction will allow us to see the quantitative differences in training performance due to co-design.\par 

\subsection{Analog ans\"atze}\label{Sec:Architecture}
Our studies concern a system of $N$ qubits labeled $i=1,...,N$, each living in the Hilbert space $h_i$ spanned by the basis states $\{|0\rangle_i, |1\rangle_i\}$. The system inhabits the product Hilbert space $\otimes_{i=1}^Nh_i$. We call $X_i,Y_i$, and $Z_i$ the spin operators acting on qubit $i$, which obey the usual spin commutator relationship $[X_i, Y_j] = i\delta_{ij}Z_i$~\cite{preskill2015lecture}. The system evolves under a time-dependent Hamiltonian $H(t;\bs{\theta})$ composed of a native--also called drift--($H_{\n}$) and control ($H_{\text{ctr}}$) Hamilonians, with tunable parameters $\bs{J}$ and $\bs{a}$ respectively with $\bs{\theta}=(\bs{J}, \bs{a})$. For the Ising and QP ans\"atze, the native Hamiltonians are
\begin{align}
    H_{\text{I}}(\bs{J}) &= \sum_{i=1}^{N-1} J_i Z_iZ_{i+1},\\
    H_{\text{QP}}(\bs{J}) &= \sum_{i=1}^{N-1} J_i Z_iZ_{N}.
\end{align}
We select the QP Hamiltonian based on our recent introduction and interest in QPs as the basic \textit{motif} of other 2D native Hamiltonians~\cite{bravo2022universal}. QPs are useful for ancillary-based operations, energy measurements, and sensing applications. Indeed, the QML community is interested in proposing scalable and analytically tractable building blocks of quantum neural networks to inform computational capacity~\cite{schuld2022quantum}. Previously, we studied the source and presence of vanishing gradients upon random initialization in QPs. In this study, we are interested in how the parameters $\bs{\theta}$ shape the landscape. \par 
The control Hamiltonian we use is
\begin{align}
   H_{\ctr}(t; \bs{a})&=\sum_{i=1}^N f^x_i(t;\bs{a})X_i+ f^y_i(t;\bs{a})Y_i.
\end{align}
where $\bs{a}$ are variational parameters that modulate the control terms through the functions $f_i^\alpha$. It is customary to parameterize these functions as a finite sum of $K$ simple functions $f_i^\alpha(t) = \sum_{k=0}^{K-1}a_{ik}^{\alpha} g_k(t)$~\cite{leng2022differentiable}. In our case, the functions $g_k$ will be Fourier series, Gaussians, or piece-wise constants. See Appdx.~\ref{A:QPs_IsingEvolution} for details.\par 

We assume that the quantum system is noiseless. Thus, an initial state $|\Psi_0\rangle$ evolves under the Schr\"{o}dinger equation $i\partial_t|\Psi(t)\rangle = H(t;\bs{\theta})|\Psi(t)\rangle$ to the final state $|\Psi_f\rangle$ at $t=T$. Mathematically, the final state is a unitary transformation of $|\Psi_0\rangle$ defined by
\begin{align}
    U(\bs{\theta}) &= \mathcal{T}\exp\left(-i\int_{0}^T H(\tau;\bs{\theta})d\tau\right) \label{eq:FormalUnitary}
\end{align}
where $\mathcal{T}$ is the time-ordering operator. To recap, an AQML ansatz is a particular choice of native Hamiltonian and controls.  Fig.~\ref{fig:Schematic} shows a pictorial representation of the QP ansatz.  \par 

\subsection{Learning with Analog ans\"atze}
In the most general setting, after the system evolves under $\U$, the system is measured several times with each qubit in a particular basis $M_i$ to calculate a loss $\Loss$ encoding the error relative to a target behavior. The loss is then passed through a classical optimizer for improvement through tuning the parameters $\bs{\theta}$. \par 

We focus on learning a unitary time evolution $U^{\text{targ}}$. The motivation for this loss function is two-fold. Firstly, studying this loss tells us about the possibilities of using AQML algorithms for quantum simulation, a promising application of quantum hardware. Secondly, previous QOC studies have shown that this loss has a trap-free landscape for some simple Hamiltonians. Mathematically, the loss is the Fr\"obenius norm
\begin{align}\label{eq:LossE}
    \mathcal{L}_{E}(\bs{\theta}) &= \frac{1}{2^{N+1}}||\U-U^{\text{targ}}||^2,
\end{align}
where $||A||_F^2 = \mathrm{Tr}(A^\dagger A)$, and the factor of $2^{-N-1}$ bounds the loss to the interval $[0,1]$ making $\mathcal{L}_E$ a measure of fidelity. \par 

Except for Sec.~\ref{Sec:Codesign}, the target unitary in all our numerical experiments is generated by the 1D transverse field Ising Hamiltonian 
\begin{equation}\label{eq:IsingEvolution}
    U^{\text{targ}} = \exp\left(-i\left(\sum_{i=1}^{N-1} Z_iZ_{i+1}-0.1\sum_{i=1}^N X_i\right)\right).
\end{equation}
The parameters will be updated using Adam, a version of gradient-descent with momentum and scalability terms~\cite{kingma2014adam}. 

\subsection{Groups of Anzatzes}\label{Sec:Groups}
We now define the groups of ans\"atze we alluded to in the introduction. All ansatz we study admit admits full expressivity in the form of controllability. These are ans\"atze for which the operators in the native and control Hamiltonians can produce any element of the dynamical Lie algebra (i.e., Pauli strings in our case) through nested commutations. \par 

Group A1 comprises QPs with arbitrary control parameters and only Gaussian or Fourier controls. These ansatz are expressive but not tailored for the task in Eq.~\ref{eq:IsingEvolution}. Group A2 consists of expressive ans\"atze with a close form solution for simulating the evolution in Eq.~\ref{eq:IsingEvolution}. For example, the Ising ansatz with constant fields is immediately in A2. Moreover, in Appdx.~\ref{A:QPs_IsingEvolution} we show that specialized QPs with a combination of Gaussian controls and extra piece-wise constant controls on the output qubit are in A2. This can also be derived from analyzing the ansatz Magnus expansion as in Sec.~\ref{Sec:Codesign}.
In summary, algorithms in A1 and A2 are expressive, but those in A2 are tailored to the task in Eq.~\ref{eq:IsingEvolution}.\par 

In Sec.~\ref{Sec:Landscapes} we show that QPs in A1 perform purely at the task in Eq.~\ref{eq:IsingEvolution} while specialized A2 QPs perform better but worse than the Ising ansatz. However, we show that A2 QPs, co-designed for the task, exhibit landscapes free of local minima. Thus, we contrast A1 and A2 ans\"atze to showcase the importance of co-design. 

\section{Landscapes of Analog Quantum Machine Learning Algorithms}\label{Sec:Landscapes}
Before going any further, this section introduces crucial concepts in landscape analysis. We discuss theoretical conditions from QOC that are necessary for trap-free landscapes, and show that while these conditions are violated, co-designed ans"atze in A2 indeed show trap-free landscapes. \par 

Let us begin reviewing how to characterize a landscape's curvature. In general, studying the landscape of any algorithm encompasses finding the parameter values with vanishing loss's derivative (i.e., the critical points). Critical points are then categorized as minima, maxima, or saddles, depending on the curvature around them. \par 

The derivative can be expressed as the function composition $\partial_{\bs{\theta}}\mathcal{L} = \partial_U \mathcal{L}\circ \partial_{\bs\theta}U$. For the case of Eq.~(\ref{eq:LossE}), the first term is independent of $\bs{\theta}$, and therefore all critical points satisfy $\mathrm{Tr}(\partial_{\bs\theta}U)=0.$\par 

In quantum optimal control theory (QOC), $\partial_{\bs\theta}U$ is called the \textit{dynamical derivative}, and it captures how the unitary $U$ changes for a changing control field. For an AQML ansatz, the dynamical derivative of a parameter $\theta^{\alpha}_{ik}$ associated with the time-dependent control function $g_k(t)$, qubit $i$ and the Hamiltonian operator $h_i^\alpha$, is given by (see Appdx.~\ref{A:Dynamical} for details)

\begin{equation}\label{eq:dynamical_derivative}
    \frac{\partial U}{\partial \theta_{ik}^\alpha} = -i U(\bs{\theta})\int_{0}^T d\tau g_{k}(\tau)h_i^\alpha(\tau).
\end{equation} 
Here, if the parameter is associated with a term in $H_{\text{nat}}$, we use $g_k(t)=1$ and $h_i=Z_iZ_{k}$; else $h_i^\alpha = X_i$ or $Y_i$. Eq.~\ref{eq:dynamical_derivative} is a special case of the AQML dynamical derivative in Ref.~\cite{leng2022differentiable} assuming our particular ansatz architecture.\par 

The nature of the critical points is dictated by the Hessian (i.e., second derivative) of the loss. In the case of the case of Eq.~(\ref{eq:LossE}), the Hessian is given by $\partial_{\bs{\theta'}}\partial_{\bs{\theta}}\mathcal{L} = \partial_{U}\mathcal{L}\circ \partial_{\bs{\theta'}}\partial_{\bs{\theta}} U$. The eigenvalues of the Hessian determine the nature of the critical point (see Fig.~\ref{fig:approximations}(a) for examples). A trap-free landscape has only one maximum or minimum critical point, while the rest are saddles. See Fig.~\ref{fig:approximations}(a) for an example of a trap-free landscape. \par  

For the remainder of this section, we show that AQML algorithms violate one of QOC's conditions necessary to ensure trap-free landscapes (Sec.~\ref{Sec:QOC}). However, we present evidence that co-design ans\"atze in A2 exhibit trap-free landscapes (Sec.~\ref{Sec:Traps}). 

\subsection{AQML Theory of Trap-Free Landscapes}\label{Sec:QOC}

\begin{figure}[b]
\includegraphics[width=0.95\linewidth]{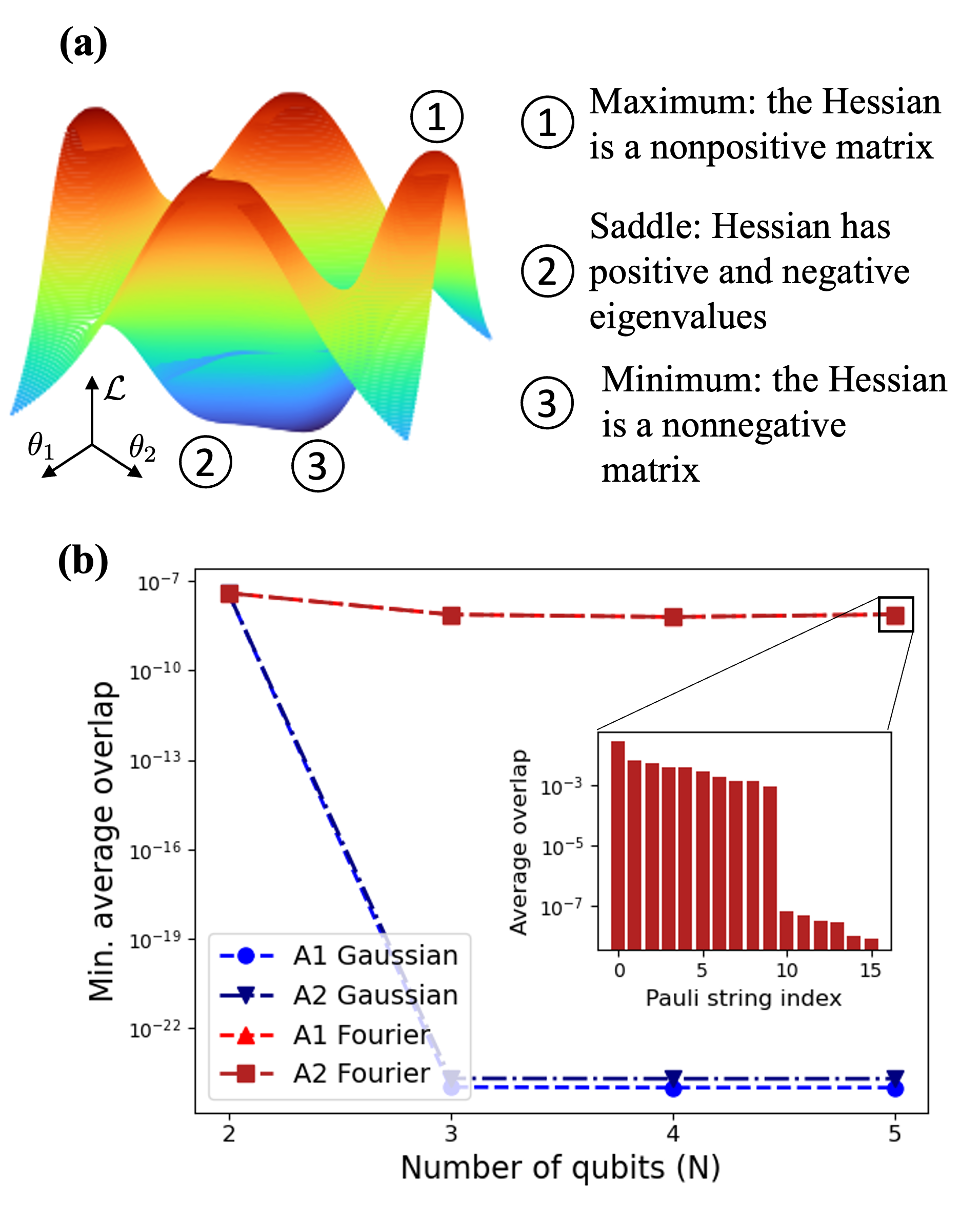}
\caption{The optimization landscape depends on the ansatz and the loss. \textbf{(a)} Exemplifies a trap-free optimization landscape where all sub-optimal critical points are maximums and saddles, while the only minimum is optimal. According to QOC theory, the AQML ansatz we present could be trap-free if it satisfies the condition of local surjectivity, meaning that each dynamical derivative of the unitary has a nonzero average overlap with each Pauli string. \textbf{(b)} Shows numerical calculations of the average overlaps between the dynamical derivatives and Pauli strings for A1 and A2 QP ans\"atze with $N=2,3,4,5$ qubits and with $4^N-1$ parameters. For some Pauli strings, the average overlap is minimal at zero or near zero, violating local surjectivity.}
\label{fig:approximations}
\end{figure}
We now discuss conditions from QOC theory necessary for trap-free landscapes. We show that AQML ans\"atze in both A1 and A2 violate these requirements, but we will show that A2 ans\"atze are trap-free. \par 

According to QOC theory, three conditions are sufficient to ensure a trap-free landscape~\cite{ho2009landscape,rabitz2006optimal,russell2017control}, which several studies have assumed to apply to AQML algorithms~\cite{magann2021pulses,leng2022differentiable}. The conditions are:
\begin{enumerate}
    \item \textbf{Unconstrained Fields.} The control fields $f_n^\alpha$ should be allowed to take on any real value.
    \item \textbf{Controllability.} The terms in the $H(t)$ should produce any element of the Hilbert space's dynamical Lie algebra through nested commutators. 
    \item \textbf{Local Surjectivity.} The dynamic derivatives of the ansatz for every parameter must be full-rank in the dynamical Lie algebra. 
\end{enumerate}
For AQML, the dynamical Lie algebra comprises all possible Pauli strings $P$ composed of products of $X_n, Y_n, Z_n$, and the identity operator. Local surjectivity requires that every Pauli-string $P$ has a nonzero average overlap with the dynamical derivative in Eq.~(\ref{eq:dynamical_derivative}). That is, 
\begin{equation}
    \forall P \quad \mathbb{E}_{\bs{\theta}}\mathrm{Tr}(P\partial_{\bs{\theta}}U)\neq 0.
\end{equation}\par 

The first two conditions are satisfied for our AQML ans\"atze. Indeed, the control fields are unconstrained, and the ans\"atze allow for universal quantum computation~\cite{gao2017quantum,bravo2022universal}, which is linked to controllability \cite{ramakrishna1996relation}. On the other hand, local surjectivity must be checked numerically on all $4^N-1$ Pauli strings. As a result, local surjectivity is often assumed. Theoretical studies of QOC landscapes justify this assumption because, experimentally, optimal control fields are indeed easy to find in most cases~\cite{ho2006effective}. However, this easiness is not the case for AQML algorithms.\par 

In Appdx.~\ref{A:LE_Landscapes}, we show that if one assumes these three conditions for an AQML ansatz with at least $4^N-1$ variational parameters, then the landscape of $\mathcal{L}_E$ is trap-free (i.e., it consists of saddles and only one local minimum and maximum). This proof uses the methods developed in Ref.~\cite{ho2009landscape}. Using the techniques in Ref.~\cite{rabitz2006optimal}, one could derive a similar result for the loss related to ground state preparation. \par 

Fig.~\ref{fig:approximations}(b) shows numerical tests of local surjectivity using A1 and A2 QP ansatz with Fourier and Gaussian parametrizations for different qubit numbers. We see that while the average overlap for many Pauli's is nonzero, there exist Pauli strings for which the overlap is at a minimum of zero or close to zero. In other words, The ans\"atze fail the local surjectivity condition even in the overparametrized regime of $4^N-1$ parameters. We also observe similar results for fewer parameters and other parametrizations, such as using Legendre polynomials and piece-wise constant functions.\par 

This observed violation gives way to a natural question: under what conditions can we expect an AQML algorithm to violate local surjectivity? In Appdx.~\ref{A:LSurj}), we show that when an AQML ansatz produces a unitary $U(\bs{\theta})$ that is Haar random distributed, then local surjectivity is violated for every $P$. This condition can be relaxed to $U(t;\bs{\theta})$ following a unitary 1-design in case the control functions $g_k$ vanish around $t=0, T$, in which cases local surjectivity is violated for Pauli's $P$ contained in the native Hamiltonian. We highlight that these conditions are sufficient but not necessary. Our algorithms do not resemble Haar randoms. Thus, further work is needed to close the gap between our analytic and numeric understanding of the scenarios leading to the violation of local surjectivity. However, as the next section shows, trap-free landscapes can still appear when an ansatz is co-designed, even when local surjectivity is violated. \par 

\subsection{Numerical Evidence for Trap-Ridden and Trap-Free Landscapes}\label{Sec:Traps}
We now present a numerical analysis of the landscapes of A1 and A2 ans\"atze. We show that A2 QPs are trap-free. We compare the quality of the solutions produced by both A1 and A2 QPs at approximating the time evolution in Eq.~\ref{eq:IsingEvolution}. Importantly, this evolution is theoretically simulatable since both an\"ate are expressible, but only A2 QPs are co-designed to accomplish this task. The main results are presented in
Fig.~\ref{fig:a1_a2_averages} and Fig.~\ref{fig:a1_a2_hessians}. \par 

\begin{figure*}[ht!]
\centering
\includegraphics[width=\linewidth]{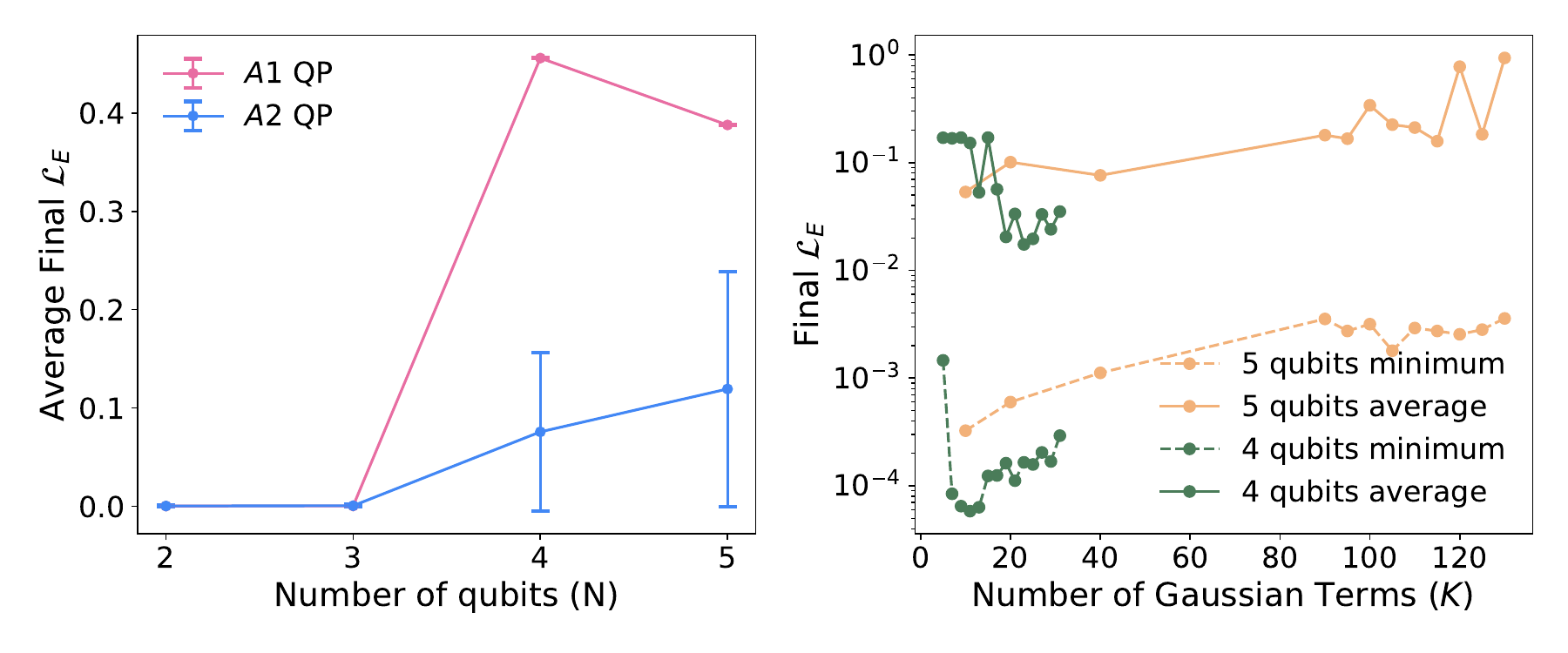}
\caption{Approximating the 1D transverse field Ising evolution on type A1 and A2 QPs. Panel (a) shows the average Fr\"obenius norm found for both types of QPs after $100$ trials and for different qubit numbers. We observe that both types of QPs fail to approximate the desired time evolution. A2 QPs consistently outperform A1 QPs. To further investigate the A2 QPs behavior, (b) shows the average and minimum Fr\"obenius norm for A2 QPs with increasing control functions $K$ for $N=4$ and $5$. For $N=4$ qubits, we observe an improvement in the loss as we approach the over-parametrized regime ($K=26$), followed by an increasing error due to over-fitting. For $N=5$ qubits, this improvement is no longer present. Moreover, when comparing the average loss to the minimum loss, we observe a stark difference by approximately two orders of magnitude, thus suggesting a wide spread of the loss at the critical points.}
\label{fig:a1_a2_averages}
\end{figure*}

For Fig.~\ref{fig:a1_a2_averages}(a), we initialized 100 instances of each QP ansatz with $K =5N$ and $K=3N$ control fields for A1 and A2 ans\"atze, respectively. We note that we ran several more simulations with varying numbers of fields, and the simulation
results remained practically unchanged. Fig.~\ref{fig:a1_a2_averages}(a) shows the average converged loss. We observe that at $N=4$ onwards, both ans\"atze failed to converge, on average, to the optimal solution. We attribute the sudden change at $N=4$ to the fact that both A1 and A2 ans\"atze have a ``star-like'' architecture, which reduces to the Ising connectivity at the values of $N=2$ and $N=3$. At $N\geq 4$, however, neither can reliably converge to the optimal solution. We note that the A2 ansatz consistently outperforms the A1, and the error bars show a much larger spread in the loss distribution at the converged critical points. \par 

To further investigate the effect of the number of control functions ($K$) and obtain a better picture of the quality and spread of the solutions for these landscapes, we repeated the same experiments for the A2 QP ansatz with varying $K$.  Fig.~\ref{fig:a1_a2_averages}(b) shows the resulting plots of minimum and average over $100$ trials for each value of $K$. For $N=4$, we see an improvement in the average and minimum loss found as $K$ approaches the over-parametrized regime at $K=26$, followed by an increasing error due to over-fitting. In addition, we observe a stark difference between the average and minimum solutions found by approximately $2$ orders of magnitude. This contrasts with the homogeneity in loss values observed for the A1 QPs. For $N=5$, the sudden improvement as the pulse basis sets grow in size is absent. Yet, we still recover the trend in the spread of the found critical points, signaling a richer and more heterogeneous distribution of critical points with differences in quality by approximately $2$ orders of magnitude. The explored values of $K$ encompass the under- and over-parametrized regimes for each value of $N$ ($K=26$ and 85, respectively). 

\begin{figure*}[ht!]
\centering
\includegraphics[width=\linewidth]{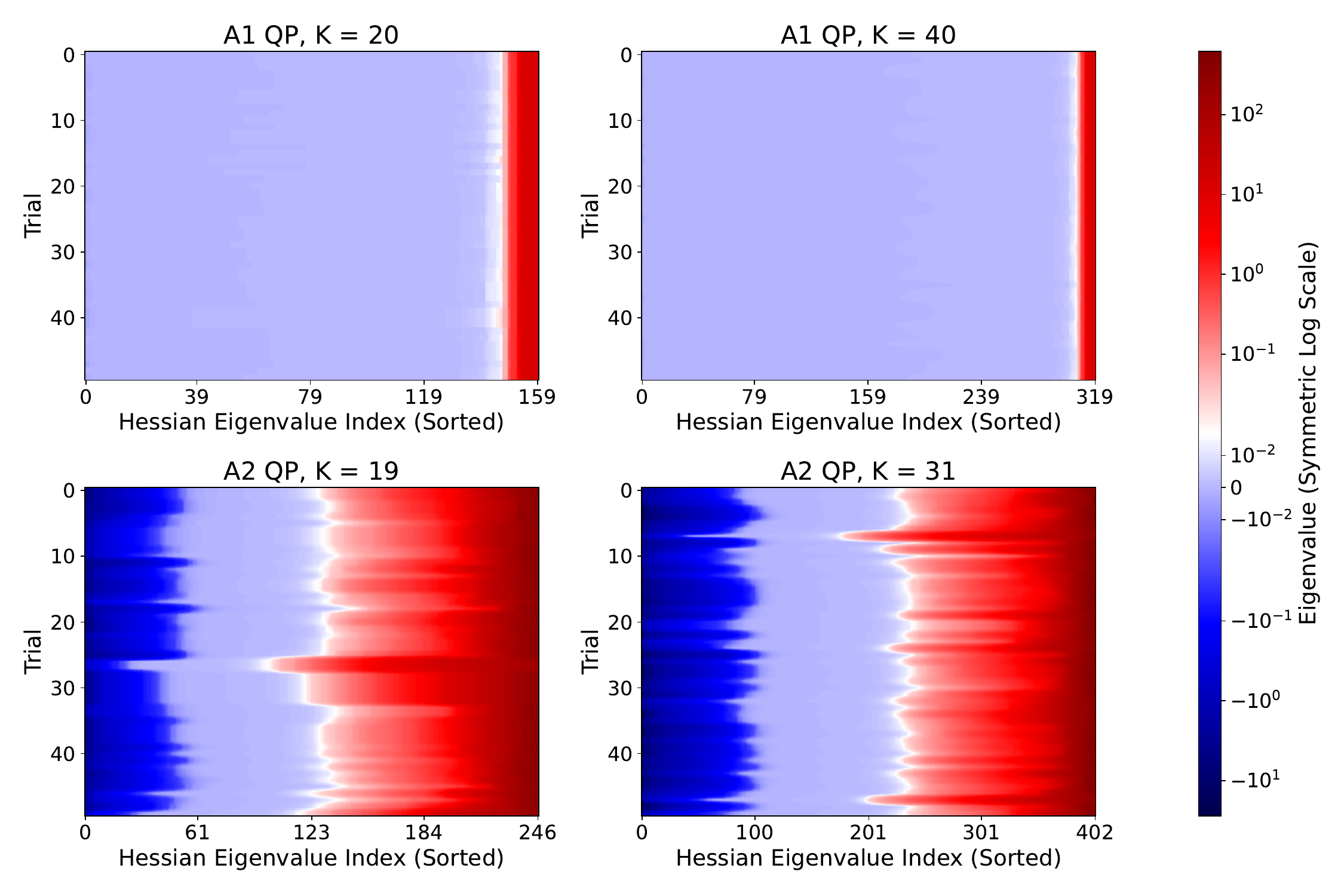}
\caption{Trap-free AQML landscapes for A2 ans\"atze. Each plot shows the magnitude of the sorted Hessian eigenvalues for $50$ trials calculated through automatic differentiation for $N=4$ of A1 and A2 QPs. The basis set sizes ($K$) are chosen such that the ans\"atze on the left (right) column are under- (over-) parametrized. All eigenvalues are non-negative for the first row (A1 QP ansatz); characterizing these critical points as traps. This is no longer the case for the second row (A2 QP ansatz), where we observe the appearance of positive and negative eigenvalues. The critical points found through A2 present positive and negative curvature along some directions, thus characterizing them as saddle points despite this ansatz violating key assumptions in QOC theory.}
\label{fig:a1_a2_hessians}
\end{figure*}

We investigate the nature of the critical points we found for both A1 and A2 QPs. Using automatic differentiation, we calculate the exact Hessian matrix at each critical point across 50 trials with varying numbers of control parameters $K$. Fig.~\ref{fig:a1_a2_hessians} shows the results for four select choices of $K$ corresponding to below and above the over-parametrized regime of $4^N-1$ parameters (see Appdx~\ref{A:Traps} for more details). The right (left) column plots the Hessian eigenvalues for both ans\"atze in the under- (over-) parameterized regime. As we can see, for the A1 QPs, both regimes exhibit only non-negative eigenvalues; thus, these critical points correspond to local minima. For the A2 QPs, however, we observe the presence of both positive and negative eigenvalues in both regimes. The critical points of A2 QPs can thus be classified as saddles. This result adds to the difference in solution quality achievable through both ans\"atze classes, as it further characterizes the A2 QPs solutions as saddles that can potentially be avoided by increasing the number of optimization cycles.\par 

In sum, our numerical experiments provide evidence of the absence of amenable landscapes for black-boxed expressive algorithms. In particular, the A1 QPs are swamped with traps. However, the experiments also revealed a quantitative and qualitative difference in the achievable solutions through the different ans\"atze. Specifically, while neither class was able to produce an optimal solution on average, we observed a significant increase in the variability of the quality of the solutions achievable through the A2 QPs and, for both cases, a significantly better solution (by approximately $3$ orders of magnitude when compared to the A1 QP solutions) was found. Additionally, we found that the A2 QPs produced saddle points, which opens the possibility of improving the solutions found by optimizing them for longer. These drastically different results in solution quality thus hint at the importance of choosing the right ansatz for the appropriate problem.

\section{Task-Algorithm Co-design}\label{Sec:Codesign}
We have shown that AQML algorithms suffer from trap-ridden landscapes. Thus, an outstanding set of questions remains: What kinds of unitaries can an AQML algorithm readily approximate? Conversely, given a desired unitary, can we devise a systemic methodology for allocating attention to different ansatz? This section argues that these questions can be addressed by thinking of algorithm-task co-design. \par 
Co-design is not a new concept; significant effort has gone into describing how quantum computing architectures are often built to address specific problems \cite{tomesh2021quantum, jiang2021co, zhao2024unraveling}. More generally, technologies are defined by, and help define, the issues they aim to resolve~\cite{pinch2012social}. Co-design is crucial in the development of hardware control software \cite{li2021co}, in proposals for materials and chemistry simulation \cite{RevModPhys.86.153,maskara2023programmable,arguello2019analogue,kivlichan2018quantum}, and in approaches to error correction \cite{bluvstein2024logical,PhysRevLett.129.030501, campbell2024series}. Indeed, the field of AQML broadly asserts that the applications developers envision are inseparable from the hardware we expect to use to realize them. \par 
In this section, co-design will take the following particular meaning. An AQML ansatz is co-designed for a desired unitary evolution such that its Magnus expansion contains the operators of the desired evolution weighted by independently tunable coefficients. Likewise, understanding the Magnus expansion of an AQML ansatz can co-design a desired unitary with a specific application. 

\subsection{Co-design guided by the Magnus Expansion}
\begin{figure*}[ht!]
    \centering
    \includegraphics[width=0.95\linewidth]{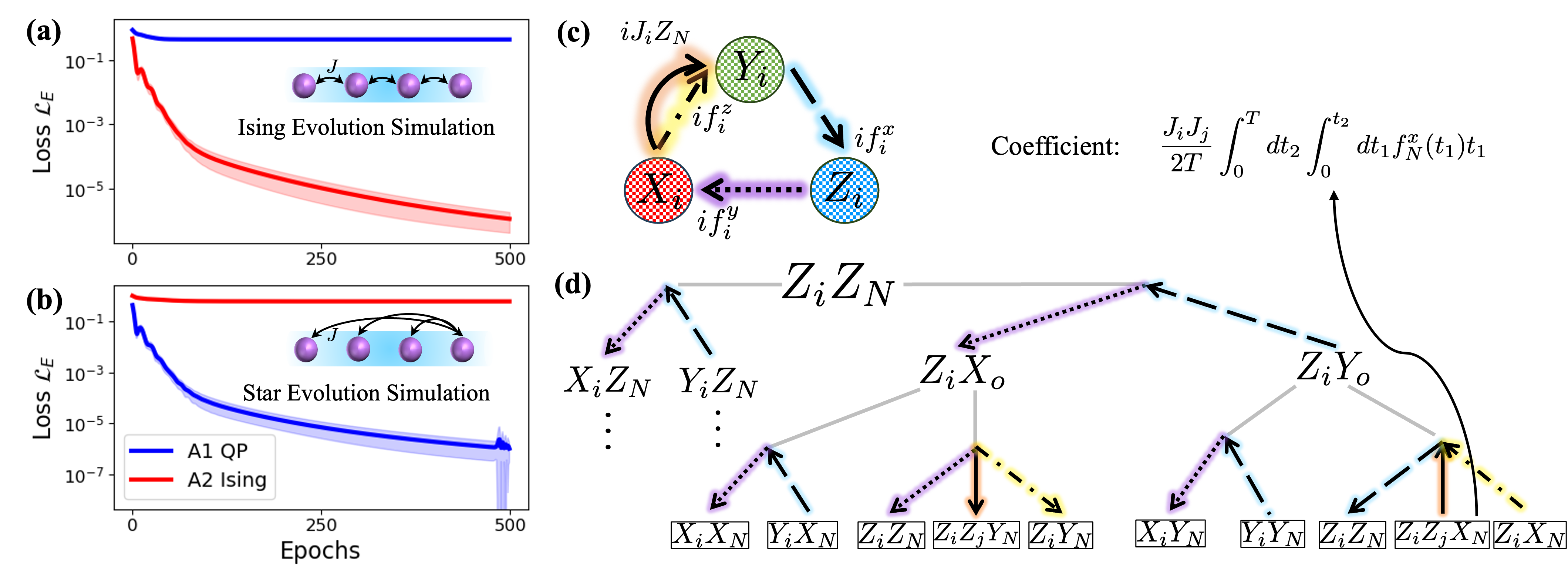}
    \caption{The importance of task-algorithm co-design. In (a)-(b), we train an A1 QP and an A2 Ising ansatz at two different tasks: simulating the evolution under a 1D transverse-field Ising evolution and of a transverse-field star evolution. Panel (a) The Ising evolution results demonstrate that the Ising ansatz succeeds while the QP ansatz trails behind. Panel (b) Shows that the QP is better suited for simulating the star evolution than the Ising ansatz. Using an ansatz suited for a particular task is key to success. Panels (c)-(d) pictorially depict calculating the operators and their coefficients in the Magnus expansion to co-design ans\"atze for a particular task.  Panel (c) depicts the transformation of different Hamiltonian terms in the QP ansatz due to nested commutations. Single-site operators transform to other single-site operators when commuted with control fields. The interaction terms in $H_{\text{nat}}$ generate many-body terms. Going against the arrows results in the coefficients picking up an extra factor of $-1$. Panel (d) exemplifies the operators in the Magnus expansion emerging from commuting an interaction term twice. The resulting operators are all in the $l=2$ term of the Magnus expansion. The arrows here symbolize the coefficients associated with each operator. We exemplify the coefficient in front of $Z_iZ_jX_N$, which can be used for time-reversible spin-squeezing.}
    \label{fig:codesign}
\end{figure*}
Let us begin by justifying our insisted attention to the Magnus expansion, and showing how it can be used for co-design.\par 
A time-dependent Hamiltonian produces a unitary $U(\bs{\theta})$ which can be expressed as generated by an effective Hamiltonian $U = \exp\left(-iT H_{\text{eff}}(\bs{\theta})\right).$ This effective Hamiltonian is calculated using the Magnus expansion \cite{brinkmann2016introduction, choi2020robust}: 
\begin{equation}
    H_{\text{eff}}(\bs{\theta}) = H^{(0)}(\bs{\theta})+H^{(1)}(\bs{\theta})+H^{(2)}(\bs{\theta})+\hdots
\end{equation}
each term in the expansion comprises nested commutators of the control and native Hamiltonian of the AQML algorithm (see Appdx.~\ref{A:Codesign} for details). The term $H^{(0)}$ is called the average Hamiltonian: 
\begin{align}
    H^{(0)}(\bs{\theta}) &= \frac{1}{T}\int_0^Tdt H(t;\bs{\theta})\notag \\
    &= H_{\text{nat}} +\sum_{i}F_i^x(\bs{\theta}) X_i+F_i^y(\bs{\theta}) Y_i+F_i^z(\bs{\theta}) Z_i.
\end{align}
where $F_i^\alpha$ is the integral of $f_i^\alpha$ divided by $T$. \par 

Unsurprisingly, we can expect an ansatz to be reliably used to simulate the dynamics of a many-body Hamiltonian whose interactions resemble those in $H_{\text{nat}}$. Fig.~\ref{fig:codesign} shows the A1 QP and A2 Ising ansatz results in approximating two kinds of evolution. Fig.~\ref{fig:codesign}(a) shows the training results of approximating a 1D transverse-field Ising model as in Eq.~\ref{eq:IsingEvolution} with $h=0.5$. The QP ansatz performs poorly, while the Ising ansatz is successful. Likewise, Fig.~\ref{fig:codesign}(b) shows the training results of approximating a star model with a transverse field of strength $h=0.5$. In the star evolution, all qubits interact with a central qubit. This kind of evolution was, in fact, the inspiration for the QP connectivity. The Ising ansatz performs poorly, while the QP ansatz is successful.\par 

For both experiments, $N=4$ and 100 randomly initialized were optimized to approximate a $T=1$ evolution with $K=N$. We observe that convergence is achieved when $a_{ik}^\alpha \approx 0$ for all $k>0$ (i.e., the algorithm is trained to choose constant fields) and for $\alpha=y,z$. Importantly, for the case of constant fields, $H^{(l)}=0$ for all $l>0.$ \par

The example of co-design in Fig.~\ref{fig:codesign} is very simple. However, the idea is that by analyzing the terms in $H^{(l)}(\bs{\theta})$ for $l>0$, and the coefficients in front of them, we can tell what kinds of unitaries can be readily approximated by our AQML algorithm. Notice that this is quite different than ensuring simulatability through either a claim of universality or controllability. In the case of universality, a desired unitary $U^\text{targ}$ is broken down into a product of unitaries, each achievable given a particular hardware architecture. Instead, our approach focuses on determining if the Hamiltonian generating the desired unitary is spaned by the term within the ansatz's Magnus expansion. Controllability studies whether the nested commutators of the terms in the Hamiltonian can generate every generator of the dynamical Lie algebra (Pauli strings, in our case). Our approach also attends to the coefficient before a given generator, which influences how easy it is to achieve evolution under said generator using an AQML ansatz in an experiment. By looking at the Magnus expansion, one can gain insight into both the generators and the coefficients in front of the generators, which are crucial to determining the likelihood of success in practice. In summary, if $A$ is an operator present in the Magnus expansion with a coefficient $c(\bs{\theta})$, we can, in theory, approximate the evolution $\exp\left(-ic(\bs{\theta})A\right)$ as long as $c(\bs{\theta})$ can be made nonzero all other coefficients for other operators can be mitigated. \par 

Fig.~\ref{fig:codesign} exemplifies how to calculate the operators and coefficients in the Magnus expansion for $l=2$ for a QP. Fig.~\ref{fig:codesign}(c) is a pictorial representation of how the Hamiltonian terms transform into each other through commutation. For example, the operator $Z_i$ transforms into $Y_i$ while the associated coefficient picks a $if_i^z(t)$ factor. The yellow, short-dash, long-dash arrow in Fig.~\ref{fig:codesign}(c) depicts this example. Going in the direction opposite to the arrows picks up an extra $-1$ on the coefficient. Fig.~\ref{fig:codesign}(d) exemplifies how to use this pictorial depiction of the commutation-induced transformations to compute the operators generated from the interaction $Z_iZ_N$ appearing in $H^{(2)}.$ We note that new interactions emerge between two inputs and the output qubit (i.e., three body terms). Using panel (c), each term comes with a coefficient composed of a nested integral of control fields and interaction strengths (see Appdx.~\ref{A:Codesign} for details on how to calculate the coefficients). \par 

It is important to note that the order of integration matters for these coefficients. For an example of the importance of integration order, see Eq.~(\ref{eq:Example_Integration}) in the Appdx.~\ref{A:Codesign}. Using this method and focusing on a particular term in the Magnus expansion label by $l$ gives us a function mapping the variational parameters $\bs{\theta}$ to the coefficients of all possibly generated Hamiltonian terms. Fig.~\ref{fig:Commutators_Appdx} shows all operators obtained from $Z_iZ_N$ in the $l=2$ term of the expansion. \par

Fig.~\ref{fig:codesign}(c) also exemplifies the coefficient in front of the operator $Z_iZ_j X_N$, creating an effective interaction between two input qubits mediated through the output qubit. In Sec.~\ref{Sec:Squeezing}, we show that this operator can naturally be used for time-reversible spin-squeezing, an observation that was previously made in Ref.~\cite{bravo2022universal} using second-order perturbation theory.\par 

Each term in the Magnus expansion contains $\mathcal{O}(N^{2l})$ operators. At first glance, for simple control fields (i.e., small $K$), one would expect that the coefficients of different operators are linearly dependent. However, numeric experimentation shows that the operators' coefficients are linearly independent and can thus be, in theory, tuned independently. For this, see Fig.~\ref{fig:SVDs} in the Appdx.~\ref{A:Codesign}, where we also argue that integration order relevance ensures linear independence by analyzing operators generated up to $l=2$. \par

With this framework in mind, let us show that co-design via attention to the Magnus expansion can reveal naturally suitable evolution for a given ansatz. Below, we show that QPs are naturally suited to implement time-reversible spin-squeezing and that the Ising ansatz is naturally well-suited for evolution under products of Wigner-Jordan strings, an evolution paramount in quantum chemistry applications of quantum computers. 

\subsection{Time-Reversible Spin-Squeezing Through Co-design}\label{Sec:Squeezing}
This subsection offers a fresh perspective on an observation we made in previous work (Ref.~\cite{bravo2022universal}), namely that QPs are suited for realizing time-reversible spin-squeezing, which can be used for quantum metrology applications. Our previous observation used second-order perturbation theory. This section shows that the Magnus expansion tells us QPs can do time-reversible spin-squeezing. We do not show how this can be used for metrology applications since that has already been explored. Instead, this section explains how the Magnus expansion can lead us to similar conclusions without the need for perturbation theory. \par 

Fig.~\ref{fig:codesign}(c) shows that the coefficient of the operator $Z_iZ_jX_N$ is given by 
\begin{align}
    \alpha_{ZZX} &= \frac{J^2}{2T}\int_0^Tdt_2\int_{0}^{t_2}dt_1f_N^x(t_1)t_1\notag \\
    &= \frac{J^2}{2T}\left(\Delta G_N^x- \Delta H_N^x\right)
\end{align}
where $G^x_N(t)$ ($H^x_N(t)$) is the second (third) anti-derivative of the function $f_N^x$, and $\Delta G_N^x$ ($\Delta H_N^x$) is the difference of the second (third) anti-derivatives from $t=0$ to $T$. \\
Then, a term in the Magnus expansion is (Appdx.~\ref{A:Squeezing})
\begin{equation}\label{eq:Squeezing}
    H_{\text{eff}} 
     = (F_N^x+\alpha_{ZZX})\left(S_{\text{in}}^z\right)^2X_N.
\end{equation}
Here, $F_N^x$ is the anti-derivative (divided by $T$) of $f_N^x$. In the equation above, $S_{\text{in}}^z = \sum_{i=1}^{N-1}Z_i$ is the total spin z-projection of the input qubits. The term $\left(S_{\text{in}}^z\right)^2$ is called the one-axis-twisting~\cite{PhysRevA.47.5138}, which creates spin-squeezing with applications to Heisenbert-limit metrology~\cite{block2023universal}. Moreover, the inclusion of $X_N$ means that the output qubit can be used to squeeze the input when initialized in the state $|+\rangle_N = \frac{1}{\sqrt{2}}(|0\rangle_N+|1\rangle_N)$, the positive eigenstate of $X_N$. Similarly, the output can anti-squeeze (i.e., create a time-reversed evolution) when in the state $|-\rangle_N = \frac{1}{\sqrt{2}}(|0\rangle_N-|1\rangle_N)$, the negative eigenstate of $X_N$. This time reversal has proven an efficient methodology in metrology experiments~\cite{colombo2022time}. \par 
Thus, the Magnus expansion analysis can elucidate the applications suitable for AQML ansatz. 

\subsection{Products of Jordan-Wigner Strings Through Co-design}\label{Sec:JordanWigner}
\begin{figure}[t]
    \centering
    \includegraphics[width=0.90\linewidth]{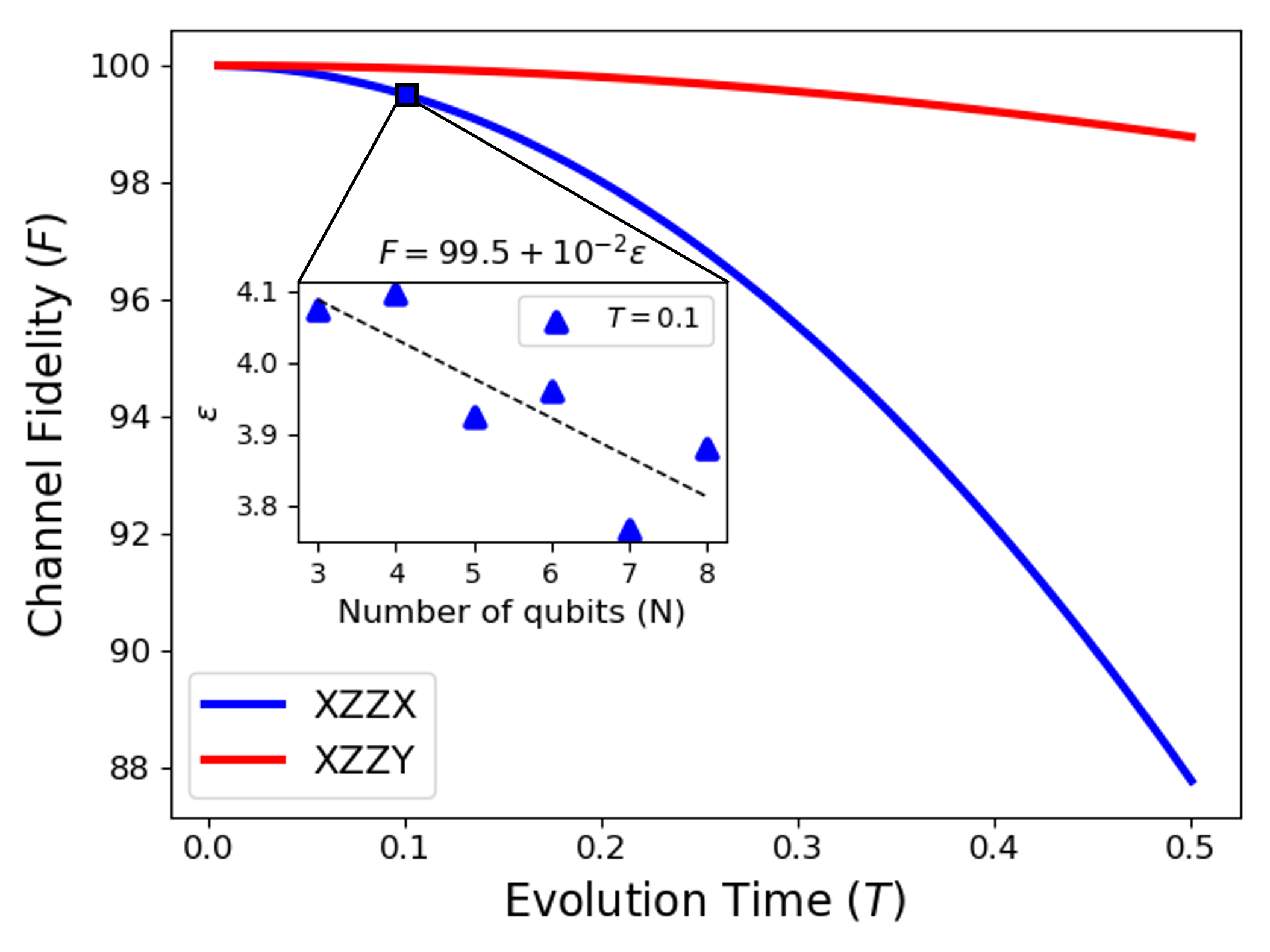}
    \caption{Results of simulating evolution under Jordan-Wigner products on an Ising ansatz. We used $N=4$ qubits to simulate the unitary evolution $e^{-iTP}$ where $P$ is a Jordan-Wigner product as defined in Eq.~\ref{eq:JWProduct} for a total time $T$. In particular, we show the average channel fidelity ($F$) of $XZZX$ and $XZZY$ evolution. We note that both products are faithfully simulated for small $T$. The inset shows the average fidelity obtained $T=0.1$ as a function of the number of qubits ($N$). We observe that the minimal fidelity remains constant around $99.5\%$. For all experiments, $K=10$ control field terms were used. Thus, a high-fidelity simulation is obtained with constant resources in $N$.}
    \label{fig:JWStrings}
\end{figure}
Lastly, this subsection shows how the Magnus expansion can be used to co-design algorithms to simulate the evolution products of Pauli strings, which show up repeatedly in quantum chemistry applications. \par 

Quantum computers promise to help calculations related to chemical reactions and molecular configurations. One approach to achieve this promise relies on mapping $N$ chemically relevant spin orbitals to $N$ qubits with $N-m$ occupied and $m$ virtual orbitals. Then, a reference state a reference state $|\psi_{\text{ref}}\rangle$ is evolved via a unitary $U$ to minimize the energy of $U|\psi_{\text{ref}}\rangle$ as as measured by the molecular Hamiltonian $H_{\text{mol}}$. This approach underpins algorithms like variational quantum eigensolvers (VQEs) \cite{peruzzo2014variational}, where a circuit with variational parameters is tuned for energy minimization regardless of whether the circuit's operation directly affects the occupancy of virtual orbitals. Chemical intuition has resulted in designing $U$ to correlate virtual and occupied orbitals~\cite{greene2022modelling}. This approach underpins unitary couple-clusters (UCC), which aims at evolving under the UCC unitary $U = \exp\{-i\hat{T}\}$ where $\hat{T}$ is the so-called cluster operator defined by 
\begin{equation}
    \hat{T} = \sum_{a,i} c^a_{i} a^\dagger_{a}a_i+\frac{1}{2}\sum_{ab,ij} c^{ab}_{ij} a^\dagger_{a}a^\dagger_{b}a_{i}a_{j}+\hdots. 
\end{equation}
A larger overlap with the ground state is then achieved by tuning the real parameters $\{c^a_{i}, c^{ab}_{ij},...\}$ which mediates coupled occupancy (unoccupacies) of the virtual (occupied) orbitals. \par 
To realize the UCC, it is crucial to implement the UCC unitary on a quantum computer. One popular approach is to break down $U$ using Trotterization \cite{romero2018strategies} and implement each operator coupling operator independently. To do this, we must map operators like $a^\dagger_aa_i$ to the qubit basis using the Jordan-Wigner transformation $a^\dagger_m = \frac{1}{2}\left(\prod_{a<m}Z_a\right)\left(X_m-iY_m\right).$ Notice that, for example, the coupled operator $a^\dagger_aa_i$ necessitates implementing unitary evolutions generated by the following operators, which we call Jordan-Wigner products:
\begin{align}\label{eq:JWProduct}
    &X_a Z_{a+1}\hdots Z_{i-1} X_i, \quad  X_a Z_{a+1}\hdots Z_{i-1}Y_i, \notag \\
    &Y_a Z_{a+1}\hdots Z_{i-1} X_i, \quad Y_a Z_{a+1}\hdots Z_{i-1} Y_i.
\end{align}
Using the Magnus expansion, we see that an AQML algorithm can simulate evolution under Jordan-Wigner products. Take, for example, the Jordan-Wigner product $X_1 Z_{2}\hdots Z_{N-1}X_N$. We can expect that evolution under this operator can be simulated using a rotated Ising ansatz with $H_{\text{nat}} = J \sum_{i}X_i X_{i+1}$ and at least $z$ controls. To see this, consider the Magnus expansion of this model. Through nested commutation, we have that
\begin{align}
    &X_{1}X_{2}\xrightarrow[]{Z_{2}} X_1Y_2 \xrightarrow[]{X_2X_3} X_1Z_2X_3 \xrightarrow[]{Z_{3}} \hdots \notag \\
    &\xrightarrow[]{X_{N-1}X_N} X_1 Z_{2}\hdots Z_{N-1}X_N
\end{align}
The operators above the arrows are the ansatz's Hamiltonian operators with which we commute. It is important to note that the Jordan-Wigner approximation was first proposed as a methodology to solve the time-independent 1D Ising model. Therefore, it is reasonable that the 1D Ising model can also simulate the evolution of Jordan-Wigner products.\par

Fig.~\ref{fig:JWStrings} shows the results of simulating the evolution of different Jordan-Wigner products. We minimized Eq.~(\ref{eq:LossE}) for these numerical experiments, starting from 100 random initialization of control fields and coupling constants $J$ with $K=10$ for $N=4$. Fourier functions were used, but similar results were observed with Legendre and Gaussian functions. In Fig.~\ref{fig:JWStrings}, we show the channel fidelity of the optimized evolution $\U$. We see that both products' evolution was faithfully simulated early, while it is easier to simulate $XZZY$ evolution over a longer period. The inset shows the results of simulating the product $XZ\hdots ZX$ for different qubits for $T=0.1$. We see that the fidelity stays largely constant. Thus, the simulation is faithful with constant resources in $N$. This contrasts with other methods for calculating these products using fermionic swap operators, where faithful simulation requires resources to scale linearly in $N$ \cite{kivlichan2018quantum}. \par 

\section{Conclusion and Outlook}\label{Sec:Conclusion}
In this work, we systematically studied the landscapes of several AQML algorithms. We showed that AQML algorithms violate a key assumption from QOC theory that would suffice for trap-free landscapes. However, we observe that ans\"atze co-designed with specific tasks often showcase trap-free landscapes. Therefore, co-design is paramount to the development of successful AQML algorithms. We show that in the case of time evolution simulation, co-designed can be realized by studying the Magnus expansion of a given ansatz and showcase our approach at tasks relevant to the quantum information community. \par 

This work elucidates various lines of inquiry worth further exploration. Firstly, further work is needed to design an AQML ansatz that satisfies local surjectivity. As pointed out in Ref.~\cite{russell2018reply}, QOC theory's assertion of trap-free landscapes does not apply when the control fields are singular (i.e., they violate local surjectivity). However, recent work shows that even singular ans\"atze depending on powers of control functions can still produce trap-free landscapes \cite{russell2021quantum}. So far, these ansatz's Hamiltonians include higher-order terms resulting from the higher-order effects of shining intense lasers onto quantum systems. Along this direction, further work could determine whether higher-order effects in the envisioned hardware of AQML algorithms can mitigate traps. For example, it is well-known that optical-tweezer arrays of Rydberg atoms exhibit nonlinear dynamics when closely packed due to the Rydberg blockage mechanism. These dynamics could then elucidate trap-free landscapes. Alternatively, classes of basis functions may exist for which local surjectivity is met. \par 
Secondly, our study of the Magnus expansion as a means to co-design AQML algorithms is limited. Based on our work, we hope the community is inspired to study the viability of simulating higher-order terms from the UCC operator.\par 

Lastly, using the operators uncovered by the Magnus expansion to simulate novel quantum dynamics seems particularly enticing. For example, the operator highlighted in Sec.~\ref{Sec:Squeezing} has a precise application to quantum metrology as it creates spin-squeezing. Moreover, such an all-to-all Ising model has recently exhibited a dynamical phase transition \cite{corps2023mechanism}, which can enable new understandings of out-of-equilibrium phenomena \cite{heyl2018dynamical}. Multi-body terms such as those in Sec.~\ref{Sec:JordanWigner} also appear when considering the dynamics of qubits to phononic baths~\cite{gambetta2020long}, and understanding these models may enable long-lived quantum information storage, kinetically constrained dynamics, and correlated quantum states of matter. \par

\begin{acknowledgments}
The authors thank Christian Arentz for their insightful discussion on QOC theory and Nishad Maskara for pointing out applications relevant to quantum chemistry. RAB acknowledges support from the National Science Foundation (NSF) Graduate Research Fellowship under Grant No. DGE1745303. JGP acknowledges support from the Harvard College Research Program and the Harvard Quantum Initiative. SFY acknowledges funding from NSF through the Q-IDEAS HDR Institute (OAC-2118310), the Q-SeNSE QCLI (OMA-2016244), and the CUA PFC (PHY-2317134). The authors also acknowledge the funding through the DARPA IMPAQT Program (HR0011-23-3-0023).

\end{acknowledgments}
\bibliographystyle{unsrtnat}
\bibliography{bibliography}

\begin{thebibliography}{81}
\providecommand{\natexlab}[1]{#1}
\providecommand{\url}[1]{\texttt{#1}}
\expandafter\ifx\csname urlstyle\endcsname\relax
  \providecommand{\doi}[1]{doi: #1}\else
  \providecommand{\doi}{doi: \begingroup \urlstyle{rm}\Url}\fi

\bibitem[Harrow et~al.(2009)Harrow, Hassidim, and Lloyd]{harrow2009quantum}
Aram~W Harrow, Avinatan Hassidim, and Seth Lloyd.
\newblock Quantum algorithm for linear systems of equations.
\newblock \emph{Physical review letters}, 103\penalty0 (15):\penalty0 150502, 2009.

\bibitem[Biamonte et~al.(2017)Biamonte, Wittek, Pancotti, Rebentrost, Wiebe, and Lloyd]{biamonte2017quantum}
Jacob Biamonte, Peter Wittek, Nicola Pancotti, Patrick Rebentrost, Nathan Wiebe, and Seth Lloyd.
\newblock Quantum machine learning.
\newblock \emph{Nature}, 549\penalty0 (7671):\penalty0 195--202, 2017.

\bibitem[Cerezo et~al.(2021)Cerezo, Arrasmith, Babbush, Benjamin, Endo, Fujii, McClean, Mitarai, Yuan, Cincio, et~al.]{cerezo2021variational}
Marco Cerezo, Andrew Arrasmith, Ryan Babbush, Simon~C Benjamin, Suguru Endo, Keisuke Fujii, Jarrod~R McClean, Kosuke Mitarai, Xiao Yuan, Lukasz Cincio, et~al.
\newblock Variational quantum algorithms.
\newblock \emph{Nature Reviews Physics}, 3\penalty0 (9):\penalty0 625--644, 2021.

\bibitem[McClean et~al.(2018)McClean, Boixo, Smelyanskiy, Babbush, and Neven]{mcclean2018barren}
Jarrod~R McClean, Sergio Boixo, Vadim~N Smelyanskiy, Ryan Babbush, and Hartmut Neven.
\newblock Barren plateaus in quantum neural network training landscapes.
\newblock \emph{Nature communications}, 9\penalty0 (1):\penalty0 4812, 2018.

\bibitem[Anschuetz and Kiani(2022)]{anschuetz2022quantum}
Eric~R Anschuetz and Bobak~T Kiani.
\newblock Quantum variational algorithms are swamped with traps.
\newblock \emph{Nature Communications}, 13\penalty0 (1):\penalty0 7760, 2022.

\bibitem[Peruzzo et~al.(2014)Peruzzo, McClean, Shadbolt, Yung, Zhou, Love, Aspuru-Guzik, and O’brien]{peruzzo2014variational}
Alberto Peruzzo, Jarrod McClean, Peter Shadbolt, Man-Hong Yung, Xiao-Qi Zhou, Peter~J Love, Al{\'a}n Aspuru-Guzik, and Jeremy~L O’brien.
\newblock A variational eigenvalue solver on a photonic quantum processor.
\newblock \emph{Nature communications}, 5\penalty0 (1):\penalty0 4213, 2014.

\bibitem[Farhi and Harrow(2016)]{farhi2016quantum}
Edward Farhi and Aram~W Harrow.
\newblock Quantum supremacy through the quantum approximate optimization algorithm.
\newblock \emph{arXiv preprint arXiv:1602.07674}, 2016.

\bibitem[Beer et~al.(2020)Beer, Bondarenko, Farrelly, Osborne, Salzmann, Scheiermann, and Wolf]{beer2020training}
Kerstin Beer, Dmytro Bondarenko, Terry Farrelly, Tobias~J Osborne, Robert Salzmann, Daniel Scheiermann, and Ramona Wolf.
\newblock Training deep quantum neural networks.
\newblock \emph{Nature communications}, 11\penalty0 (1):\penalty0 808, 2020.

\bibitem[Yuan et~al.(2021)Yuan, Sun, Liu, Zhao, and Zhou]{yuan2021quantum}
Xiao Yuan, Jinzhao Sun, Junyu Liu, Qi~Zhao, and You Zhou.
\newblock Quantum simulation with hybrid tensor networks.
\newblock \emph{Physical Review Letters}, 127\penalty0 (4):\penalty0 040501, 2021.

\bibitem[Martyn and Swingle(2019)]{martyn2019product}
John Martyn and Brian Swingle.
\newblock Product spectrum ansatz and the simplicity of thermal states.
\newblock \emph{Physical Review A}, 100\penalty0 (3):\penalty0 032107, 2019.

\bibitem[Kiani et~al.(2022)Kiani, De~Palma, Marvian, Liu, and Lloyd]{kiani2022learning}
Bobak~Toussi Kiani, Giacomo De~Palma, Milad Marvian, Zi-Wen Liu, and Seth Lloyd.
\newblock Learning quantum data with the quantum earth mover’s distance.
\newblock \emph{Quantum Science and Technology}, 7\penalty0 (4):\penalty0 045002, 2022.

\bibitem[LaRose et~al.(2019)LaRose, Tikku, O’Neel-Judy, Cincio, and Coles]{larose2019variational}
Ryan LaRose, Arkin Tikku, {\'E}tude O’Neel-Judy, Lukasz Cincio, and Patrick~J Coles.
\newblock Variational quantum state diagonalization.
\newblock \emph{npj Quantum Information}, 5\penalty0 (1):\penalty0 57, 2019.

\bibitem[Lloyd et~al.(2020)Lloyd, Schuld, Ijaz, Izaac, and Killoran]{lloyd2020quantum}
Seth Lloyd, Maria Schuld, Aroosa Ijaz, Josh Izaac, and Nathan Killoran.
\newblock Quantum embeddings for machine learning.
\newblock \emph{arXiv preprint arXiv:2001.03622}, 2020.

\bibitem[Uvarov and Biamonte(2021)]{uvarov2021barren}
AV~Uvarov and Jacob~D Biamonte.
\newblock On barren plateaus and cost function locality in variational quantum algorithms.
\newblock \emph{Journal of Physics A: Mathematical and Theoretical}, 54\penalty0 (24):\penalty0 245301, 2021.

\bibitem[Patti et~al.(2021)Patti, Najafi, Gao, and Yelin]{patti2021entanglement}
Taylor~L Patti, Khadijeh Najafi, Xun Gao, and Susanne~F Yelin.
\newblock Entanglement devised barren plateau mitigation.
\newblock \emph{Physical Review Research}, 3\penalty0 (3):\penalty0 033090, 2021.

\bibitem[Kobayashi et~al.(2022)Kobayashi, Nakaji, and Yamamoto]{kobayashi2022overfitting}
Masahiro Kobayashi, Kouhei Nakaji, and Naoki Yamamoto.
\newblock Overfitting in quantum machine learning and entangling dropout.
\newblock \emph{Quantum Machine Intelligence}, 4\penalty0 (2):\penalty0 30, 2022.

\bibitem[Kuroiwa and Nakagawa(2021)]{kuroiwa2021penalty}
Kohdai Kuroiwa and Yuya~O Nakagawa.
\newblock Penalty methods for a variational quantum eigensolver.
\newblock \emph{Physical Review Research}, 3\penalty0 (1):\penalty0 013197, 2021.

\bibitem[Chen et~al.(2022)Chen, Huang, Hsing, Goan, and Kao]{chen2022variational}
Samuel Yen-Chi Chen, Chih-Min Huang, Chia-Wei Hsing, Hsi-Sheng Goan, and Ying-Jer Kao.
\newblock Variational quantum reinforcement learning via evolutionary optimization.
\newblock \emph{Machine Learning: Science and Technology}, 3\penalty0 (1):\penalty0 015025, 2022.

\bibitem[Stokes et~al.(2020)Stokes, Izaac, Killoran, and Carleo]{stokes2020quantum}
James Stokes, Josh Izaac, Nathan Killoran, and Giuseppe Carleo.
\newblock Quantum natural gradient.
\newblock \emph{Quantum}, 4:\penalty0 269, 2020.

\bibitem[Huang et~al.(2020)Huang, Lei, and Li]{huang2020empirical}
Yiming Huang, Hang Lei, and Xiaoyu Li.
\newblock An empirical study of optimizers for quantum machine learning.
\newblock In \emph{2020 IEEE 6th International Conference on Computer and Communications (ICCC)}, pages 1560--1566. IEEE, 2020.

\bibitem[Bonet-Monroig et~al.(2023)Bonet-Monroig, Wang, Vermetten, Senjean, Moussa, B{\"a}ck, Dunjko, and O'Brien]{bonet2023performance}
Xavier Bonet-Monroig, Hao Wang, Diederick Vermetten, Bruno Senjean, Charles Moussa, Thomas B{\"a}ck, Vedran Dunjko, and Thomas~E O'Brien.
\newblock Performance comparison of optimization methods on variational quantum algorithms.
\newblock \emph{Physical Review A}, 107\penalty0 (3):\penalty0 032407, 2023.

\bibitem[Tangpanitanon et~al.(2020)Tangpanitanon, Thanasilp, Dangniam, Lemonde, and Angelakis]{tangpanitanon2020expressibility}
Jirawat Tangpanitanon, Supanut Thanasilp, Ninnat Dangniam, Marc-Antoine Lemonde, and Dimitris~G Angelakis.
\newblock Expressibility and trainability of parametrized analog quantum systems for machine learning applications.
\newblock \emph{Physical Review Research}, 2\penalty0 (4):\penalty0 043364, 2020.

\bibitem[Markovi{\'c} and Grollier(2020)]{markovic2020quantum}
Danijela Markovi{\'c} and Julie Grollier.
\newblock Quantum neuromorphic computing.
\newblock \emph{Applied physics letters}, 117\penalty0 (15), 2020.

\bibitem[Mujal et~al.(2021)Mujal, Mart{\'\i}nez-Pe{\~n}a, Nokkala, Garc{\'\i}a-Beni, Giorgi, Soriano, and Zambrini]{mujal2021opportunities}
Pere Mujal, Rodrigo Mart{\'\i}nez-Pe{\~n}a, Johannes Nokkala, Jorge Garc{\'\i}a-Beni, Gian~Luca Giorgi, Miguel~C Soriano, and Roberta Zambrini.
\newblock Opportunities in quantum reservoir computing and extreme learning machines.
\newblock \emph{Advanced Quantum Technologies}, 4\penalty0 (8):\penalty0 2100027, 2021.

\bibitem[Schuman et~al.(2022)Schuman, Kulkarni, Parsa, Mitchell, Kay, et~al.]{schuman2022opportunities}
Catherine~D Schuman, Shruti~R Kulkarni, Maryam Parsa, J~Parker Mitchell, Bill Kay, et~al.
\newblock Opportunities for neuromorphic computing algorithms and applications.
\newblock \emph{Nature Computational Science}, 2\penalty0 (1):\penalty0 10--19, 2022.

\bibitem[MacLennan(2014)]{maclennan2014promise}
Bruce~J MacLennan.
\newblock The promise of analog computation.
\newblock \emph{International Journal of General Systems}, 43\penalty0 (7):\penalty0 682--696, 2014.

\bibitem[Markovi{\'c} et~al.(2020)Markovi{\'c}, Mizrahi, Querlioz, and Grollier]{markovic2020physics}
Danijela Markovi{\'c}, Alice Mizrahi, Damien Querlioz, and Julie Grollier.
\newblock Physics for neuromorphic computing.
\newblock \emph{Nature Reviews Physics}, 2\penalty0 (9):\penalty0 499--510, 2020.

\bibitem[Wunderlich et~al.(2019)Wunderlich, Kungl, M{\"u}ller, Hartel, Stradmann, Aamir, Gr{\"u}bl, Heimbrecht, Schreiber, St{\"o}ckel, et~al.]{wunderlich2019demonstrating}
Timo Wunderlich, Akos~F Kungl, Eric M{\"u}ller, Andreas Hartel, Yannik Stradmann, Syed~Ahmed Aamir, Andreas Gr{\"u}bl, Arthur Heimbrecht, Korbinian Schreiber, David St{\"o}ckel, et~al.
\newblock Demonstrating advantages of neuromorphic computation: a pilot study.
\newblock \emph{Frontiers in neuroscience}, 13:\penalty0 260, 2019.

\bibitem[Bravo et~al.(2022{\natexlab{a}})Bravo, Najafi, Gao, and Yelin]{bravo2022quantum}
Rodrigo~Araiza Bravo, Khadijeh Najafi, Xun Gao, and Susanne~F Yelin.
\newblock Quantum reservoir computing using arrays of rydberg atoms.
\newblock \emph{PRX Quantum}, 3\penalty0 (3):\penalty0 030325, 2022{\natexlab{a}}.

\bibitem[Mart\'{\i}nez-Pe\~na et~al.(2021)Mart\'{\i}nez-Pe\~na, Giorgi, Nokkala, Soriano, and Zambrini]{PhysRevLett.127.100502}
Rodrigo Mart\'{\i}nez-Pe\~na, Gian~Luca Giorgi, Johannes Nokkala, Miguel~C. Soriano, and Roberta Zambrini.
\newblock Dynamical phase transitions in quantum reservoir computing.
\newblock \emph{Phys. Rev. Lett.}, 127:\penalty0 100502, Aug 2021.
\newblock \doi{10.1103/PhysRevLett.127.100502}.
\newblock URL \url{https://link.aps.org/doi/10.1103/PhysRevLett.127.100502}.

\bibitem[Ghosh et~al.(2021)Ghosh, Krisnanda, Paterek, and Liew]{ghosh2021realising}
Sanjib Ghosh, Tanjung Krisnanda, Tomasz Paterek, and Timothy~CH Liew.
\newblock Realising and compressing quantum circuits with quantum reservoir computing.
\newblock \emph{Communications Physics}, 4\penalty0 (1):\penalty0 105, 2021.

\bibitem[Bravo et~al.(2022{\natexlab{b}})Bravo, Najafi, Patti, Gao, and Yelin]{bravo2022universal}
Rodrigo~Araiza Bravo, Khadijeh Najafi, Taylor~L Patti, Xun Gao, and Susanne~F Yelin.
\newblock Universal quantum perceptrons for quantum machine learning.
\newblock \emph{arXiv preprint arXiv:2211.07075}, 2022{\natexlab{b}}.

\bibitem[Leng et~al.(2022)Leng, Peng, Qiao, Lin, and Wu]{leng2022differentiable}
Jiaqi Leng, Yuxiang Peng, Yi-Ling Qiao, Ming Lin, and Xiaodi Wu.
\newblock Differentiable analog quantum computing for optimization and control.
\newblock \emph{Advances in Neural Information Processing Systems}, 35:\penalty0 4707--4721, 2022.

\bibitem[Mujal et~al.(2023)Mujal, Mart{\'\i}nez-Pe{\~n}a, Giorgi, Soriano, and Zambrini]{mujal2023time}
Pere Mujal, Rodrigo Mart{\'\i}nez-Pe{\~n}a, Gian~Luca Giorgi, Miguel~C Soriano, and Roberta Zambrini.
\newblock Time-series quantum reservoir computing with weak and projective measurements.
\newblock \emph{npj Quantum Information}, 9\penalty0 (1):\penalty0 16, 2023.

\bibitem[Kornja{\v{c}}a et~al.(2024)Kornja{\v{c}}a, Hu, Zhao, Wurtz, Weinberg, Hamdan, Zhdanov, Cantu, Zhou, Bravo, et~al.]{kornjavca2024large}
Milan Kornja{\v{c}}a, Hong-Ye Hu, Chen Zhao, Jonathan Wurtz, Phillip Weinberg, Majd Hamdan, Andrii Zhdanov, Sergio~H Cantu, Hengyun Zhou, Rodrigo~Araiza Bravo, et~al.
\newblock Large-scale quantum reservoir learning with an analog quantum computer.
\newblock \emph{arXiv preprint arXiv:2407.02553}, 2024.

\bibitem[Sauvage et~al.(2024)Sauvage, Larocca, Coles, and Cerezo]{sauvage2024building}
Frederic Sauvage, Martin Larocca, Patrick~J Coles, and Marco Cerezo.
\newblock Building spatial symmetries into parameterized quantum circuits for faster training.
\newblock \emph{Quantum Science and Technology}, 9\penalty0 (1):\penalty0 015029, 2024.

\bibitem[Park and Killoran(2024)]{park2024hamiltonian}
Chae-Yeun Park and Nathan Killoran.
\newblock Hamiltonian variational ansatz without barren plateaus.
\newblock \emph{Quantum}, 8:\penalty0 1239, 2024.

\bibitem[Park et~al.(2024)Park, Kang, and Huh]{park2024hardware}
Chae-Yeun Park, Minhyeok Kang, and Joonsuk Huh.
\newblock Hardware-efficient ansatz without barren plateaus in any depth.
\newblock \emph{arXiv preprint arXiv:2403.04844}, 2024.

\bibitem[Kiani et~al.(2020)Kiani, Lloyd, and Maity]{kiani2020learning}
Bobak~Toussi Kiani, Seth Lloyd, and Reevu Maity.
\newblock Learning unitaries by gradient descent.
\newblock \emph{arXiv preprint arXiv:2001.11897}, 2020.

\bibitem[Wiersema and Zhou(2020)]{wiersema2020yvette}
Roeland Wiersema and Cunlu Zhou.
\newblock Yvette de sereville, juan felipe carrasquilla, yong baek kim, and henry yuen.
\newblock \emph{Exploring entanglement and optimization within the hamiltonian variational ansatz. PRX Quantum}, 1\penalty0 (2):\penalty0 020319, 2020.

\bibitem[You and Wu(2021)]{you2021exponentially}
Xuchen You and Xiaodi Wu.
\newblock Exponentially many local minima in quantum neural networks.
\newblock In \emph{International Conference on Machine Learning}, pages 12144--12155. PMLR, 2021.

\bibitem[Liu et~al.(2023)Liu, Najafi, Sharma, Tacchino, Jiang, and Mezzacapo]{liu2023analytic}
Junyu Liu, Khadijeh Najafi, Kunal Sharma, Francesco Tacchino, Liang Jiang, and Antonio Mezzacapo.
\newblock Analytic theory for the dynamics of wide quantum neural networks.
\newblock \emph{Physical Review Letters}, 130\penalty0 (15):\penalty0 150601, 2023.

\bibitem[You et~al.(2022)You, Chakrabarti, and Wu]{you2022convergence}
Xuchen You, Shouvanik Chakrabarti, and Xiaodi Wu.
\newblock A convergence theory for over-parameterized variational quantum eigensolvers.
\newblock \emph{arXiv preprint arXiv:2205.12481}, 2022.

\bibitem[Arenz et~al.(2014)Arenz, Gualdi, and Burgarth]{arenz2014control}
Christian Arenz, Giulia Gualdi, and Daniel Burgarth.
\newblock Control of open quantum systems: case study of the central spin model.
\newblock \emph{New Journal of Physics}, 16\penalty0 (6):\penalty0 065023, 2014.

\bibitem[Arenz and Rabitz(2020)]{arenz2020drawing}
Christian Arenz and Herschel Rabitz.
\newblock Drawing together control landscape and tomography principles.
\newblock \emph{Physical Review A}, 102\penalty0 (4):\penalty0 042207, 2020.

\bibitem[Wu et~al.(2012)Wu, Long, Dominy, Ho, and Rabitz]{wu2012singularities}
Re-Bing Wu, Ruixing Long, Jason Dominy, Tak-San Ho, and Herschel Rabitz.
\newblock Singularities of quantum control landscapes.
\newblock \emph{Physical Review A}, 86\penalty0 (1):\penalty0 013405, 2012.

\bibitem[Magann et~al.(2021)Magann, Arenz, Grace, Ho, Kosut, McClean, Rabitz, and Sarovar]{magann2021pulses}
Alicia~B Magann, Christian Arenz, Matthew~D Grace, Tak-San Ho, Robert~L Kosut, Jarrod~R McClean, Herschel~A Rabitz, and Mohan Sarovar.
\newblock From pulses to circuits and back again: A quantum optimal control perspective on variational quantum algorithms.
\newblock \emph{PRX Quantum}, 2\penalty0 (1):\penalty0 010101, 2021.

\bibitem[Tomesh and Martonosi(2021)]{tomesh2021quantum}
Teague Tomesh and Margaret Martonosi.
\newblock Quantum codesign.
\newblock \emph{IEEE Micro}, 41\penalty0 (5):\penalty0 33--40, 2021.

\bibitem[Jiang et~al.(2021)Jiang, Xiong, and Shi]{jiang2021co}
Weiwen Jiang, Jinjun Xiong, and Yiyu Shi.
\newblock A co-design framework of neural networks and quantum circuits towards quantum advantage.
\newblock \emph{Nature communications}, 12\penalty0 (1):\penalty0 579, 2021.

\bibitem[Zhao et~al.(2024)Zhao, Xu, Qi, Xia, Bilal, Gong, and Kou]{zhao2024unraveling}
Xudong Zhao, Xiaolong Xu, Lianyong Qi, Xiaoyu Xia, Muhammad Bilal, Wenwen Gong, and Huaizhen Kou.
\newblock Unraveling quantum computing system architectures: An extensive survey of cutting-edge paradigms.
\newblock \emph{Information and Software Technology}, 167:\penalty0 107380, 2024.

\bibitem[Preskill(2015)]{preskill2015lecture}
John Preskill.
\newblock Lecture notes for ph219/cs219: Quantum information.
\newblock \emph{Accesible via http://www. theory. caltech. edu/people/preskill/ph229}, 2015.

\bibitem[Schuld and Killoran(2022)]{schuld2022quantum}
Maria Schuld and Nathan Killoran.
\newblock Is quantum advantage the right goal for quantum machine learning?
\newblock \emph{Prx Quantum}, 3\penalty0 (3):\penalty0 030101, 2022.

\bibitem[Kingma and Ba(2014)]{kingma2014adam}
Diederik~P Kingma and Jimmy Ba.
\newblock Adam: A method for stochastic optimization.
\newblock \emph{arXiv preprint arXiv:1412.6980}, 2014.

\bibitem[Ho et~al.(2009)Ho, Dominy, and Rabitz]{ho2009landscape}
Tak-San Ho, Jason Dominy, and Herschel Rabitz.
\newblock Landscape of unitary transformations in controlled quantum dynamics.
\newblock \emph{Physical Review A}, 79\penalty0 (1):\penalty0 013422, 2009.

\bibitem[Rabitz et~al.(2006)Rabitz, Hsieh, and Rosenthal]{rabitz2006optimal}
Herschel Rabitz, Michael Hsieh, and Carey Rosenthal.
\newblock Optimal control landscapes for quantum observables.
\newblock \emph{The Journal of chemical physics}, 124\penalty0 (20), 2006.

\bibitem[Russell et~al.(2017)Russell, Rabitz, and Wu]{russell2017control}
Benjamin Russell, Herschel Rabitz, and Re-Bing Wu.
\newblock Control landscapes are almost always trap free: A geometric assessment.
\newblock \emph{Journal of Physics A: Mathematical and Theoretical}, 50\penalty0 (20):\penalty0 205302, 2017.

\bibitem[Gao et~al.(2017)Gao, Wang, and Duan]{gao2017quantum}
Xun Gao, Sheng-Tao Wang, and L-M Duan.
\newblock Quantum supremacy for simulating a translation-invariant ising spin model.
\newblock \emph{Physical review letters}, 118\penalty0 (4):\penalty0 040502, 2017.

\bibitem[Ramakrishna and Rabitz(1996)]{ramakrishna1996relation}
Viswanath Ramakrishna and Herschel Rabitz.
\newblock Relation between quantum computing and quantum controllability.
\newblock \emph{Physical Review A}, 54\penalty0 (2):\penalty0 1715, 1996.

\bibitem[Ho and Rabitz(2006)]{ho2006effective}
Tak-San Ho and Herschel Rabitz.
\newblock Why do effective quantum controls appear easy to find?
\newblock \emph{Journal of Photochemistry and Photobiology A: Chemistry}, 180\penalty0 (3):\penalty0 226--240, 2006.

\bibitem[Pinch(2012)]{pinch2012social}
Trevor Pinch.
\newblock The social construction of technology: A review.
\newblock \emph{Technological change}, pages 17--35, 2012.

\bibitem[Li et~al.(2021)Li, Wu, Shi, Javadi-Abhari, Ding, and Xie]{li2021co}
Gushu Li, Anbang Wu, Yunong Shi, Ali Javadi-Abhari, Yufei Ding, and Yuan Xie.
\newblock On the co-design of quantum software and hardware.
\newblock In \emph{Proceedings of the Eight Annual ACM International Conference on Nanoscale Computing and Communication}, pages 1--7, 2021.

\bibitem[Georgescu et~al.(2014)Georgescu, Ashhab, and Nori]{RevModPhys.86.153}
I.~M. Georgescu, S.~Ashhab, and Franco Nori.
\newblock Quantum simulation.
\newblock \emph{Rev. Mod. Phys.}, 86:\penalty0 153--185, Mar 2014.
\newblock \doi{10.1103/RevModPhys.86.153}.
\newblock URL \url{https://link.aps.org/doi/10.1103/RevModPhys.86.153}.

\bibitem[Maskara et~al.(2023)Maskara, Ostermann, Shee, Kalinowski, Gomez, Bravo, Wang, Krylov, Yao, Head-Gordon, et~al.]{maskara2023programmable}
Nishad Maskara, Stefan Ostermann, James Shee, Marcin Kalinowski, Abigail~McClain Gomez, Rodrigo~Araiza Bravo, Derek~S Wang, Anna~I Krylov, Norman~Y Yao, Martin Head-Gordon, et~al.
\newblock Programmable simulations of molecules and materials with reconfigurable quantum processors.
\newblock \emph{arXiv preprint arXiv:2312.02265}, 2023.

\bibitem[Arg{\"u}ello-Luengo et~al.(2019)Arg{\"u}ello-Luengo, Gonz{\'a}lez-Tudela, Shi, Zoller, and Cirac]{arguello2019analogue}
Javier Arg{\"u}ello-Luengo, Alejandro Gonz{\'a}lez-Tudela, Tao Shi, Peter Zoller, and J~Ignacio Cirac.
\newblock Analogue quantum chemistry simulation.
\newblock \emph{Nature}, 574\penalty0 (7777):\penalty0 215--218, 2019.

\bibitem[Kivlichan et~al.(2018)Kivlichan, McClean, Wiebe, Gidney, Aspuru-Guzik, Chan, and Babbush]{kivlichan2018quantum}
Ian~D Kivlichan, Jarrod McClean, Nathan Wiebe, Craig Gidney, Al{\'a}n Aspuru-Guzik, Garnet Kin-Lic Chan, and Ryan Babbush.
\newblock Quantum simulation of electronic structure with linear depth and connectivity.
\newblock \emph{Physical review letters}, 120\penalty0 (11):\penalty0 110501, 2018.

\bibitem[Bluvstein et~al.(2024)Bluvstein, Evered, Geim, Li, Zhou, Manovitz, Ebadi, Cain, Kalinowski, Hangleiter, et~al.]{bluvstein2024logical}
Dolev Bluvstein, Simon~J Evered, Alexandra~A Geim, Sophie~H Li, Hengyun Zhou, Tom Manovitz, Sepehr Ebadi, Madelyn Cain, Marcin Kalinowski, Dominik Hangleiter, et~al.
\newblock Logical quantum processor based on reconfigurable atom arrays.
\newblock \emph{Nature}, 626\penalty0 (7997):\penalty0 58--65, 2024.

\bibitem[Zhao et~al.(2022)]{PhysRevLett.129.030501}
Youwei Zhao et~al.
\newblock Realization of an error-correcting surface code with superconducting qubits.
\newblock \emph{Phys. Rev. Lett.}, 129:\penalty0 030501, Jul 2022.
\newblock \doi{10.1103/PhysRevLett.129.030501}.
\newblock URL \url{https://link.aps.org/doi/10.1103/PhysRevLett.129.030501}.

\bibitem[Campbell(2024)]{campbell2024series}
Earl Campbell.
\newblock A series of fast-paced advances in quantum error correction.
\newblock \emph{Nature Reviews Physics}, pages 1--2, 2024.

\bibitem[Brinkmann(2016)]{brinkmann2016introduction}
Andreas Brinkmann.
\newblock Introduction to average hamiltonian theory. i. basics.
\newblock \emph{Concepts in Magnetic Resonance Part A}, 45\penalty0 (6):\penalty0 e21414, 2016.

\bibitem[Choi et~al.(2020)Choi, Zhou, Knowles, Landig, Choi, and Lukin]{choi2020robust}
Joonhee Choi, Hengyun Zhou, Helena~S Knowles, Renate Landig, Soonwon Choi, and Mikhail~D Lukin.
\newblock Robust dynamic hamiltonian engineering of many-body spin systems.
\newblock \emph{Physical Review X}, 10\penalty0 (3):\penalty0 031002, 2020.

\bibitem[Kitagawa and Ueda(1993)]{PhysRevA.47.5138}
Masahiro Kitagawa and Masahito Ueda.
\newblock Squeezed spin states.
\newblock \emph{Phys. Rev. A}, 47:\penalty0 5138--5143, Jun 1993.
\newblock \doi{10.1103/PhysRevA.47.5138}.
\newblock URL \url{https://link.aps.org/doi/10.1103/PhysRevA.47.5138}.

\bibitem[Block et~al.(2023)Block, Ye, Roberts, Chern, Wu, Wang, Pollet, Davis, Halperin, and Yao]{block2023universal}
Maxwell Block, Bingtian Ye, Brenden Roberts, Sabrina Chern, Weijie Wu, Zilin Wang, Lode Pollet, Emily~J Davis, Bertrand~I Halperin, and Norman~Y Yao.
\newblock A universal theory of spin squeezing.
\newblock \emph{arXiv preprint arXiv:2301.09636}, 2023.

\bibitem[Colombo et~al.(2022)Colombo, Pedrozo-Pe{\~n}afiel, Adiyatullin, Li, Mendez, Shu, and Vuleti{\'c}]{colombo2022time}
Simone Colombo, Edwin Pedrozo-Pe{\~n}afiel, Albert~F Adiyatullin, Zeyang Li, Enrique Mendez, Chi Shu, and Vladan Vuleti{\'c}.
\newblock Time-reversal-based quantum metrology with many-body entangled states.
\newblock \emph{Nature Physics}, pages 1--6, 2022.

\bibitem[Greene-Diniz et~al.(2022)Greene-Diniz, Manrique, Sennane, Magnin, Shishenina, Cordier, Llewellyn, Krompiec, Ran{\v{c}}i{\'c}, and Ramo]{greene2022modelling}
Gabriel Greene-Diniz, David~Zsolt Manrique, Wassil Sennane, Yann Magnin, Elvira Shishenina, Philippe Cordier, Philip Llewellyn, Michal Krompiec, Marko~J Ran{\v{c}}i{\'c}, and David~Mu{\~n}oz Ramo.
\newblock Modelling carbon capture on metal-organic frameworks with quantum computing.
\newblock \emph{EPJ Quantum Technology}, 9\penalty0 (1):\penalty0 37, 2022.

\bibitem[Romero et~al.(2018)Romero, Babbush, McClean, Hempel, Love, and Aspuru-Guzik]{romero2018strategies}
Jonathan Romero, Ryan Babbush, Jarrod~R McClean, Cornelius Hempel, Peter~J Love, and Al{\'a}n Aspuru-Guzik.
\newblock Strategies for quantum computing molecular energies using the unitary coupled cluster ansatz.
\newblock \emph{Quantum Science and Technology}, 4\penalty0 (1):\penalty0 014008, 2018.

\bibitem[Russell et~al.(2018)Russell, Wu, and Rabitz]{russell2018reply}
Benjamin Russell, Rebing Wu, and Herschel Rabitz.
\newblock Reply to comment on ‘control landscapes are almost always trap free: a geometric assessment’.
\newblock \emph{Journal of Physics A: Mathematical and Theoretical}, 51\penalty0 (50):\penalty0 508002, 2018.

\bibitem[Russell et~al.(2021)Russell, Wu, and Rabitz]{russell2021quantum}
Benjamin Russell, Re-Bing Wu, and Herschel Rabitz.
\newblock Quantum control landscapes beyond the dipole approximation: Controllability, singular controls, and resources.
\newblock \emph{Frontiers in Physics}, 9:\penalty0 674794, 2021.

\bibitem[Corps et~al.(2023)Corps, Str{\'a}nsk{\`y}, and Cejnar]{corps2023mechanism}
{\'A}ngel~L Corps, Pavel Str{\'a}nsk{\`y}, and Pavel Cejnar.
\newblock Mechanism of dynamical phase transitions: The complex-time survival amplitude.
\newblock \emph{Physical Review B}, 107\penalty0 (9):\penalty0 094307, 2023.

\bibitem[Heyl(2018)]{heyl2018dynamical}
Markus Heyl.
\newblock Dynamical quantum phase transitions: a review.
\newblock \emph{Reports on Progress in Physics}, 81\penalty0 (5):\penalty0 054001, 2018.

\bibitem[Gambetta et~al.(2020)Gambetta, Zhang, Hennrich, Lesanovsky, and Li]{gambetta2020long}
Filippo~M Gambetta, Chi Zhang, Markus Hennrich, Igor Lesanovsky, and Weibin Li.
\newblock Long-range multibody interactions and three-body antiblockade in a trapped rydberg ion chain.
\newblock \emph{Physical Review Letters}, 125\penalty0 (13):\penalty0 133602, 2020.

\bibitem[Bergholm et~al.(2018)Bergholm, Izaac, Schuld, Gogolin, Ahmed, Ajith, Alam, Alonso-Linaje, AkashNarayanan, Asadi, et~al.]{bergholm2018pennylane}
Ville Bergholm, Josh Izaac, Maria Schuld, Christian Gogolin, Shahnawaz Ahmed, Vishnu Ajith, M~Sohaib Alam, Guillermo Alonso-Linaje, B~AkashNarayanan, Ali Asadi, et~al.
\newblock Pennylane: Automatic differentiation of hybrid quantum-classical computations.
\newblock \emph{arXiv preprint arXiv:1811.04968}, 2018.

\end{thebibliography}

\onecolumngrid
\appendix

\section{QP's Expressivity and the Ising Evolution}\label{A:QPs_IsingEvolution}

The control fields $f_i^\alpha$ are sums of simple functions $f_i^\alpha(t;\bs{a}) = \sum_{k=1}^Ka_{ik}^\alpha g_k(t)$. We use the following different functions,
\begin{align}
    g_k(t) &= \cos\left(\frac{2\pi k t}{T}\right) \text{ (Fourier) } \label{eq:Fourier}\\
    g_k(t) &= \exp\left(-\frac{(t-Tk/K)^2}{2\sigma^2}\right) \text{ (Gaussian) } \label{eq:Gaussian} \\
    g_k(t) &= \text{LegendrePolynomial}(t/T, k) \text{ (Legendre) } \\  \label{eq:Legendre}
    g_k(t) &= \Theta(kT/K\leq t\leq (k+1)T/K)\text{ (Piece-wise constant) } \\  \label{eq:PWC}
\end{align}
where $T$ is the total evolution time and $\sigma$ a pulse-width. The Fourier expansion is continuous, while the Gaussian expansion more closely represents a circuit-based (or pulse-based) system. \par 

We now show that a QP with Gaussian controls and piece-wise constant controls on the inputs can simulate the target unitary in Eq.~\ref{eq:IsingEvolution}. To do so, let us first observe that we can Trotterize the Ising evolution in Eq.~\ref{eq:IsingEvolution} such that 
\begin{equation}
    U^{\text{targ}} \approx \left(e^{-i\frac{1}{n}\sum_i Z_iZ_{i+1}}e^{i\frac{0.1}{n}\sum_i X_i}\right)^n+\mathcal{O}(1/n).
\end{equation}
Thus, we only have to ensure that a QP ansatz can produce the unitaries $U^i_R = e^{-i\alpha X_i}$ and $U^i_I = e^{-i\beta Z_iZ_{i+1}}$ for every $i$ and for, at least, small $\alpha, \beta.$ \par 
Consider a QP with Gaussian pulses of width $\sigma\ll 1.$ Suppose the Gaussian functions $g_k(t)$ have a unit height and are centered at $t=0, T/K, 2T/K,..., (K-1)T/K$. The parameters $a_{ik}^{x,y}$ paired with these Gaussian functions dictate the magnitude of the control field. Additionally, suppose the output qubit contains piecewise constant functions that change at the same intervals. The magnitude of the PWC functions is mediated via parameters $b_{Nk}^{x,y}$. In that case, one can faithfully break down the unitary evolution of a QP as follows: 
\begin{equation}
    U(\bs{\theta})\approx \prod_{k=0}^{K-1} \left(e^{-i\frac{T}{K}\left(\sum_{i=1}^{N-1} J_i Z_iZ_{N}+b_{Nk}^xX_N+b_{Nk}^yY_N\right)}\prod_{i=1}^{N}e^{-i\sigma \left(a_{ik}^x X_i + a_{ik}^y Y_i\right)}\right).
\end{equation}
Immediately, we notice that the right-most term of the equation above contains the rotations $U^i_R$. Moreover, the left-most term contains the unitary $U^{N-1}_I$. However, we are yet missing Ising-type interactions between different input qubits (i.e., $Z_{i}Z_{i+1}$ for $i<N-2$). Suppose, however, that the PWC parameters are such that $J/b\ll 1$. For the sake of argument, suppose that only the $x$-field is present. In that case, we have already shown in Ref.~\cite{bravo2022quantum} that this effectuates the following evolution on the input qubits: 
\begin{equation}\label{eq:Effective}
    U_{I,\text{eff}}^{k}= e^{-i\frac{T}{K}\sum_{ij}\frac{J_iJ_j}{b_{Nk}^x}Z_iZ_j}
\end{equation}
This contains the interaction terms $U_{I}^{i}$ for which $j=i+1$. However, it also contains unwanted interactions, which we want to eliminate. This can be done by choosing the two subsequent rotations generated by the Gaussians be of opposite sign and large magnitude such that $\sigma a_{jk}^{x} = \pi = -\sigma a_{j(k+1)}^{x}$ for a $j$ which we do not want. This means that per interaction term $U_I^i$ that we want to produce, we must ensure we use at least three values of $k$: one used to rotate qubits according to $U_{R}^i$, and two with $b_{Nk}^x$ large and equal to make sure that for all $j\neq i+1$, we have a vanishing interaction. Notice that this works because 
\begin{equation}
    e^{-i\frac{J_iJ_j T}{Kb_{Nk}^x}Z_iZ_j}e^{i\pi X_j}e^{-i\frac{J_iJ_j T}{Kb_{Nk}^x}Z_iZ_j}e^{-i\pi X_j} = e^{-i\frac{J_iJ_j T}{Kb_{Nk}^x}Z_iZ_j}e^{i\frac{J_iJ_j T}{Kb_{Nk}^x}Z_iZ_j}=\mathds{1}. 
\end{equation}
For the case of $j=i+1$, the total evolution time of the interaction is $\beta = 3J_{i}J_{i+1}T/Kb_{Nk}^x$, which is necessarily small. \par 
This procedure necessitates that $K$ be large enough for each interaction to be tuned independently. This means that $K \geq 3N n$ where $n$ is the number of Trotter steps. \par 
Moreover, our proof heavily relies on producing the effective Hamiltonian in Eq.~\ref{eq:Effective}. Therefore, it is paramount that the QP ansatz includes PWC functions on the output qubit. In numerical experiments, we have seen that while Gaussian controls work best, Fourier controls also allow the QP to simulate the Ising evolution as long as the output contains PWC controls. We say that a QP with PWC controls is an A2 ansatz as specified in Sec.~\ref{Sec:Groups}.\par 

\section{Landscapes of unitary time evolution}\label{A:LE_Landscapes}

\subsection{Derivation of the Dynamical Derivative}\label{A:Dynamical}
Assume that we can break the evolution of our system into $M$ constant portions: 
\begin{equation}
    U = \prod_{i=1}^M\exp\left(-i\delta t_i\left(H_{\n}+H_{\ctr}(\delta t_i/2)\right)\right)
\end{equation}
where $\delta t_i = t_i-t_{i-1}$. Then, 
\begin{align}
    \frac{\partial U}{\partial \theta^\alpha_{nk}} &= \sum_{i=1}^M\prod_{j=i+1}^M e^{-iH(\delta t_j/2)\delta t_j}\frac{\partial}{\partial\theta_{nk}^\alpha} e^{-i\left(H_\n+\sum_{nk\alpha}\theta^\alpha_{nk}g_n(\delta t_i/2)S_n^{\alpha}\delta t_i\right)}\prod_{j=1}^{i-1} e^{-iH(\delta t_j/2)\delta t_j}\\
    &=-i\sum_{i=1}^M\prod_{j=i}^M e^{-iH(\delta t_j/2)\delta t_j} g_k(\delta t_i/2)S_{n}^\alpha\delta t_i \prod_{j=1}^{i-1} e^{-iH(\delta t_j/2)\delta t_j}.
\end{align}
Taking the continuous limit and using the midpoint definition of integration, we obtain 
\begin{equation}
    \frac{\partial U}{\partial \theta^\alpha_{nk}} = -i\int_{0}^{T} g_k(t) U(T,t)S^{\alpha}_n U(t,0)dt.
\end{equation}
By introducing the unity $1=U(\boldsymbol{\theta})U^\dagger U(\boldsymbol{\theta})$ on the left side of the integral one obtains Eq.~\ref{eq:dynamical_derivative}. 

\subsection{Trap-free Landscapes}
For this section, we use the loss $\mathcal{L}_E = ||(U-W)||^2_F$ where $||A||_F^2=\mathrm{Tr}(A^\dagger A)$ is the Fr\"obenius norm. Note that this loss differs from that in Eq.~\ref{eq:LossE} by a factor of $2^{N+1}$, but this discrepancy only changes its values, not its curvature. Evidently, 
\begin{equation}
    \mathcal{L}_E = 2(2^N) - \tr\left(W^\dagger U + U^\dagger W\right)
\end{equation}
where we use the notation $W=U^{\text{targ}}$. Let us express the dynamical derivative in Eq.~\ref{eq:dynamical_derivative} as $\partial_{\theta_{nk}^\alpha}U = -i U \mu_{nk}^\alpha$. \par
The derivative of the loss function is 
\begin{equation}
    \frac{\partial \mathcal{L}}{\partial \theta_{nk}^\alpha} = -i\tr\left((U^\dagger W-W^\dagger U)]\mu_{nk}^\alpha \right).
\end{equation}
If we assume local surjectivity, then a critical point where all derivatives vanish can only be accomplished if $U^\dagger W = W^\dagger U$ \cite{ho2009landscape}. Obviously one answer is that $U = e^{i\phi} W$ is a critical point for any phase $\phi$. We denote $U^\dagger W$ as $\chi$. Note that $\chi^2 = I$. At the critical points, $\chi=\chi^\dagger$ and so it must have eigenvalues $\pm 1$ ($\phi=0,2\pi$). For the critical points, the loss function takes on the following values: 
\begin{equation*}
    \mathcal{L}_E(\text{ c.p. }) = 2\left(2^N-\sum_{i=1}^{2^N}(-1)^{n_i}\right)
\end{equation*} 
where $n_i=0$ if the $i^{th}$ eigenvalue of $\chi$ is positive and $n_i=1$ otherwise. Clearly, the case where $W=U$ yields $\mathcal{L}_E=0$ and $W=-U$ yields $L_{E}=4(2^N)$. Other critical points can be found by taking $W$ and permuting $n$ columns using a permutation matrix $\Pi$. In such case, $\chi = \Pi^\dagger$. Evidently, the loss becomes $\mathcal{L}_E=2(2^N-n)$ which has a degeneracy of $\binom{2^N}{n}$. \par 
Let's now calculate the Hessian of the loss function $\LE$. We obtain
\begin{align}
    \frac{\partial^2\LE}{\partial \theta^\alpha_{nk}\theta^\beta_{ml}} &= \mathrm{Tr}\left(\chi \mu_{nk}^\alpha \mu_{ml}^\beta + \chi^\dagger \mu_{ml}^\beta\mu_{nk}^\alpha \right) -i\mathrm{Tr}\left((\chi-\chi^\dagger)\frac{\partial \mu_n^k}{\partial\theta_m^l}\right)
\end{align}
since $\mu_{nk}^\alpha$ is hermitian.\par 
Note that 
\begin{align}
    \frac{\partial S_{n}^\alpha}{\partial \theta_{ml}^\alpha} &= \frac{\partial U^\dagger}{\partial \theta_{ml}^\beta}S_n^\alpha U(t) + U^\dagger S_n^\alpha \frac{\partial U}{\partial \theta_{ml}^
    \beta} \notag \\
    &= i[\mu_{ml}^\beta,S_n^\alpha(t)].\notag
\end{align}
Therefore, 
\begin{equation}
    \frac{\partial \mu_{nk}^\alpha}{\partial \theta_{ml}^\beta} = i\int_0^Tdt_1\int_{0}^{t_1}dt_2 g_k(t_1)g_l(t_2) [S_m^\beta(t_2), S_n^\alpha (t_1)] = i\int_0^Td\tau g_k(\tau)[\mu_{ml}^\beta(\tau),S_n^\alpha(\tau)].
\end{equation}
At a critical point, $\chi^\dagger=\chi$, and therefore 
\begin{align}\label{eq:HessianCP}
    \frac{\partial^2\LE}{\partial \theta^\alpha_{nk}\theta^\beta_{ml}}(\text{c.p.}) &= \mathrm{Tr}\left(\chi \{\mu_{nk}^\alpha, \mu_{ml}^\beta\} \right).
\end{align}
Since $\chi$ is diagonalizable, there exist a change of basis $D$ such that $\chi = D A D^\dagger$ where $A=\text{diag}((-1)^{n_1},(-1)^{n_2},...,(-1)^{n_d})$ with $d=2^N.$ Let $a,b$ stand for the indexes of $\theta_{nk}^\alpha$ and $\theta_{ml}^\beta$. Eq.~(\ref{eq:HessianCP}) is the entry $\mathcal{H}_{ab}$ of the Hessian $\mathcal{H}$. Inserting the identity $DD^\dagger$ into Eq.~(\ref{eq:HessianCP}) we obtain 
\begin{equation}
    \mathcal{H}_{ab} =\tr\left(A(D^\dagger\mu_a DD^\dagger \mu_b + D^\dagger\mu_n DD^\dagger \mu_a D)\right).
\end{equation}
Let's call $\bar{\mu}_{a,b} = D^\dagger \mu_{a,b}D$. Then, 
\begin{align*}
    \mathcal{H}_{ab} &= \sum_{i=}^d(-1)^{{n_i}}\sum_{j=1}^{d}\left( \langle i|\bar{\mu}_a|j\rangle\langle j|\bar{\mu}_a|i\rangle + \langle i|\bar{\mu}_b|j\rangle\langle j|\bar{\mu}_a|i\rangle\right) \\
    &=2\sum_{i=1}^d (-1)^{n_i}\langle i|\bar{\mu}_a|i\rangle\langle i |\bar{\mu}_b|i\rangle \\
    &+2\sum_{j<i}\left((-1)^{n_i}+(-1)^{n_j}\right)\left(\text{Re}[\langle i|\bar{\mu}_a|j\rangle]\text{Re}[\langle i|\bar{\mu}_b|j\rangle]+\text{Im}[\langle i|\bar{\mu}_a|j\rangle]\text{Im}[\langle i|\bar{\mu}_b|j\rangle] \right).
\end{align*}
Therefore, we have that 
\begin{align}
    \mathcal{H}_{ab} &= \bs{\mu}_a^T \bs{\Gamma} \bs{\mu}_b\\
    \bs{\mu}_a^T &= (\text{Re}[\langle i|\bar{\mu}_a|j\rangle], \text{ } 1\leq i,j\leq d, \text{Im}[\langle i|\bar{\mu}_a|j\rangle] \text{ } 1\leq i < j\leq d)
\end{align}
with the $d^2\times d^2$ matrix $\bs{\Gamma}$ defined as 
\begin{equation}\label{Eq:Gamma}
    \bs{\Gamma} = 2\text{diag}\left\{(-1)^{n_1},\hdots, (-1)^{n_d}, (-1)^{n_i}+(-1)^{n_j} \text{ } 1\leq i <j \leq d, (-1)^{n_d}, (-1)^{n_i}+(-1)^{n_j} \text{ } 1\leq i <j \leq d \right\}.
\end{equation}
So far, this theory has yielded the same results as in Ref.~\cite{ho2009landscape}, however, while in that case the Hessian is defined as on different times $\mathcal{H}(t,t')$ since the functions the controls are to be tunned at every point in time, in an AQML model, the functions are only tunable through a series of finite parameters $N_{p}$. The entire Hessian is a finite matrix defined by 
\begin{equation}
    \mathcal{H} = M^T \bs{\Gamma} M = \begin{pmatrix}
    \bs{\mu}^T_1 \\
    \bs{\mu}^T_2 \\
    \vdots \\
    \bs{\mu}^T_{N_p}
    \end{pmatrix}\bs{\Gamma}\begin{pmatrix}
    \bs{\mu}_1 & \bs{\mu}_2 \hdots & \bs{\mu}_{N_p}
    \end{pmatrix}.
\end{equation}
The matrix $M$ is a ($d^2\times N_p$) matrix. Since we have assumed local surjectivity, the entries of $\bs{\mu}_a$ are all linearly independent, but not necessarily orthogonal. \par 
When $N_p\geq d^2$, we can apply some of the results in Sec. VI of Ref.~\cite{ho2009landscape}. These results show that there exists an orthonormal basis $\bs{O}$ that diagonalizes $\mathcal{H}$ ($C = O^T\mathcal{H}O$) where $C$ is a diagonal matrix whose entries are the curvature of the landscape and the critical points. Thus, the matrix $\bs{\Gamma}$ is congruent to $C$. Using Sylvester's law of inertia, congruent matrices share rank (i.e., the number of nonzero eigenvalues) and signature (i.e., the number of positive minus negative eigenvalues). A signature of magnitude equal to the rank means the critical point is a maximum or a minimum if positive or negative, respectively. A signature smaller than the rank in magnitude means the critical point is a saddle point. This is because a critical point with such a signature would have at least one direction in which the curvature is of the opposite sign as other directions. Thus, Eq.~(\ref{Eq:Gamma}) tells us the most relevant information about the curvature by looking at the sign of its entries and the number of zero entries. \par 
For the case that $U=W$ $n_i = 0$ and thus, the fist $d$ entries of $\bs{\Gamma}$ are positive. The rest of the entries are zero. Thus, the rank of the Hessian is $d$ and its signature $d$. Thus, this critical point is a global minimum. For the case that $U=-W$, $\Gamma$ has a rank of $d$ and a signature of $-d$, and it is thus a global maximum. For the cases that $U=\Pi W$ where $\Pi$ is a permutation of $n$ columns, the rank is $R_n= d(d-2n)+2dn^2$ greater than $d$ for $n>0$. The signature is $S_n= d(d-2n)$. Clearly, for $n>0$ $S_n<R_n$.\par 
Let's summarize our findings. We made two significant assumptions: 
\begin{enumerate}
    \item We assumed that our AQML algorithm is locally surjective. In practice, this needs to be checked at least numerically. 
    \item We assumed that we have at least $2^{2N}$ variational parameters. In practice, this is a choice experiments must decide on. Making this choice can be given the meaning of choosing to overparametrize the AQML algorithm.
\end{enumerate}
Given those assumptions, we found that 
\begin{enumerate}
    \item The critical points of $\LE$ correspond to the cases in which $U^\dagger W = W^\dagger U$.
    \item The values of the loss function at these critical points is an integer given by $2(2^N-n)$ where $0\leq n \leq 2^n$ is an integer. Each critical point has a degeneracy of $\binom{2^N}{n}$ solutions. \par 
    \item The rank of the Hessian at the critical points is $R_n= d(d-2n)+2dn^2$ which achieves a maximum value of $d^2$ for $n=2^N$ ($U=W$) or $n=0$ ($U=-W$). 
    \item The signature of the Hessian at the critical points is $S_n= d(d-2n)$. Thus, the critical point of the case of $W=U$ ($n=0$) is a global minimum. That is, the Hessian contains only non-negative eigenvalues. Similarly, $W=-U$ ($n=d$) is a global maximum. That is, the Hessian contains only non-positive eigenvalues. For all other critical points, there is at least one positive and one negative eigenvalue and thus they are saddle points.
\end{enumerate}
In conclusion, the assumptions of local surjectivity and overparametrization should lead to trap-free landscapes of $\LE$. 

\section{AQML Models and Local Surjectivity}\label{A:LSurj}
This appendix shows that AQML ansatz fails to satisfy local surjectivity in certain cases. To prove this statement, it is helpful to establish the following lemmas. 

\begin{lemma}\label{lemma:lemma1}
Suppose $h$ and $P$ are two Pauli strings and that $U_1$ and $U_2$ are independent unitary 1-designs. Then,
\begin{equation}\label{eq:Lemma1}
    \mathbb{E}_{U_1, U_2}\left(\mathrm{Tr}(P U_1 U_2^\dagger h U_2)\right)=0.
\end{equation}
\end{lemma}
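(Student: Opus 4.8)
The plan is to prove Lemma~\ref{lemma:lemma1} by exploiting the fact that a unitary 1-design reproduces the Haar average of any linear function of a single copy of the unitary. The key identity I would use is that for a unitary 1-design $U$ on a $d$-dimensional space, $\mathbb{E}_U\, U A U^\dagger = \frac{\tr(A)}{d} \mathds{1}$ for any operator $A$, and likewise $\mathbb{E}_U\, U A U^\dagger B$ is linear in $A$ through this projector.

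First I would average over $U_2$ while holding $U_1$ fixed. Writing the trace as $\tr\!\bigl(P U_1 U_2^\dagger h U_2\bigr)$, linearity of expectation and the 1-design property give
\begin{equation}
\mathbb{E}_{U_2}\bigl(U_2^\dagger h U_2\bigr) = \frac{\tr(h)}{2^N}\,\mathds{1}.
\end{equation}
Since $h$ is a nontrivial Pauli string (I would note that if $h = \mathds{1}$ the statement needs the accompanying hypothesis that $P$ is also nontrivial, which is the standing assumption in this appendix), $\tr(h) = 0$, so the inner average vanishes identically. Hence $\mathbb{E}_{U_2}\tr(P U_1 U_2^\dagger h U_2) = 0$ for every fixed $U_1$, and taking the remaining expectation over $U_1$ — using independence of $U_1$ and $U_2$ so that the iterated expectation equals the joint one — leaves $0$. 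This gives Eq.~\eqref{eq:Lemma1}.

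I would also record the symmetric route (average over $U_1$ first): $\mathbb{E}_{U_1} U_1 = \frac{1}{2^N}\tr(\cdot)$-type statements are not available for a 1-design since $\mathbb{E}_{U_1}U_1$ need not vanish, so the clean argument really goes through the conjugation $U_2^\dagger h U_2$; this is why the lemma is stated with $U_2$ appearing on both sides of $h$. The main (very minor) obstacle is bookkeeping the trivial-Pauli edge case and making explicit that ``independent 1-designs'' is exactly what licenses splitting the joint expectation into the iterated one; there is no analytic difficulty beyond invoking the defining property of a 1-design.
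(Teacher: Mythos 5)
Your proposal is correct and follows essentially the same route as the paper's proof: average over $U_2$ first using the 1-design identity $\mathbb{E}_{U_2}(U_2^\dagger h U_2) = \tfrac{\mathrm{Tr}(h)}{2^N}\mathds{1}$, then conclude from $\mathrm{Tr}(h)=0$ for a nontrivial Pauli string. Your explicit note about the trivial-Pauli edge case is a reasonable clarification the paper leaves implicit.
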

\begin{proof}
    First, recall that for a 1-design $\mathbb{E}_U(U^\dagger A U) = \frac{\mathrm{Tr}(A)}{2^{N}}\mathds{1}.$ Then, 
    
    \begin{align}
        \mathbb{E}_{U_1, U_2}\left(\mathrm{Tr}(P U_1 U_2^\dagger h U_2)\right) &= \mathbb{E}_{U_1}\left(PU_1 \mathbb{E}_{U_2}\left(U_2^\dagger h U_2 \right) \right) \notag\\
        &=\frac{\mathrm{Tr}(h)}{2^N}\mathbb{E}_{U_1}\left(PU_1 \right) = 0.
    \end{align}
    This follows from $\mathrm{Tr}(h)=0$.
\end{proof}

\begin{lemma}\label{lemma:lemma2}
    Suppose $M$ is a Hermitian operator and $U$ is a Haar random matrix. Then, 
    \begin{equation}
        \mathbb{E}_U(\mathrm{Tr}\left(UM\right))= 0.
    \end{equation}
\end{lemma}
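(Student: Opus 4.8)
The plan is to reduce the claim to a first-moment (unitary 1-design) computation over the Haar measure, just as in Lemma~\ref{lemma:lemma1}. First I would write $M$ in an eigenbasis, or more simply keep it abstract and use linearity of the trace and of the expectation: $\mathbb{E}_U(\mathrm{Tr}(UM)) = \mathrm{Tr}\big(\mathbb{E}_U(U)\,M\big)$, so the whole statement hinges on computing the first moment $\mathbb{E}_U(U)$ of a Haar-random unitary.

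Next I would establish that $\mathbb{E}_U(U) = 0$. The standard argument is a symmetry/invariance one: by left-invariance of the Haar measure, for any fixed unitary $V$ we have $\mathbb{E}_U(U) = \mathbb{E}_U(VU) = V\,\mathbb{E}_U(U)$. Since this holds for every $V\in U(2^N)$, the matrix $\mathbb{E}_U(U)$ must commute with (indeed be fixed by left multiplication by) every unitary, which forces $\mathbb{E}_U(U)=0$ — the only matrix invariant under all left translations is the zero matrix (alternatively, pick $V = e^{i\phi}\mathds{1}$ to get $\mathbb{E}_U(U) = e^{i\phi}\mathbb{E}_U(U)$ for all $\phi$, hence $\mathbb{E}_U(U)=0$). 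This last one-line phase argument is probably the cleanest to present.

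Then the conclusion is immediate: $\mathbb{E}_U(\mathrm{Tr}(UM)) = \mathrm{Tr}(0\cdot M) = 0$, and Hermiticity of $M$ is not even needed (it is presumably stated just because that is the case of interest in the application). I would remark that only the 1-design property of the ensemble is used, so the lemma holds verbatim for any unitary 1-design, paralleling the hypothesis in Lemma~\ref{lemma:lemma1}.

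I do not anticipate a genuine obstacle here — the only thing to be careful about is justifying the interchange of expectation and trace (finite-dimensional, so trace is a finite linear combination of matrix entries and Fubini/linearity applies trivially) and stating the invariance argument precisely rather than hand-waving it. If one preferred an entrywise proof, one would instead invoke $\mathbb{E}_U(U_{ij}) = 0$, which follows from $\mathbb{E}_U(U_{ij}) = \mathbb{E}_U((e^{i\phi}U)_{ij}) = e^{i\phi}\mathbb{E}_U(U_{ij})$ using invariance under $U\mapsto e^{i\phi}U$; either route is short.
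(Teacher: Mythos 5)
Your core argument is correct and is, if anything, tighter than the paper's. The paper first diagonalizes $M = D\Lambda D^\dagger$ (this is where Hermiticity is used), strips the conjugating unitaries via left/right invariance of the Haar measure to reduce to $\sum_i \Lambda_i \int dU\, U_{ii}$, and then argues that each $\int dU\, U_{ii}$ vanishes ``by symmetry,'' on the grounds that otherwise the corresponding eigenvector would be a preferred direction. You instead pull the expectation inside the trace and kill the first moment $\mathbb{E}_U(U)$ directly with the phase argument $\mathbb{E}_U(U) = \mathbb{E}_U(e^{i\phi}U) = e^{i\phi}\mathbb{E}_U(U)$ for all $\phi$. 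This buys two things: it replaces the paper's heuristic symmetry step with a one-line rigorous argument, and it makes transparent that Hermiticity of $M$ is never needed --- both observations are correct.

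One genuine error in your write-up is the closing remark that ``only the 1-design property of the ensemble is used, so the lemma holds verbatim for any unitary 1-design.'' It does not. A unitary 1-design only reproduces the Haar average of \emph{balanced} first moments, i.e.\ expressions of the form $\mathbb{E}_U(UAU^\dagger)$ as in Lemma~\ref{lemma:lemma1}; it says nothing about the unbalanced moment $\mathbb{E}_U(U)$. Concretely, the uniform ensemble over the single-qubit Paulis $\{\mathds{1},X,Y,Z\}$ is a unitary 1-design, yet $\mathbb{E}_U(U) = \tfrac{1}{4}(\mathds{1}+X+Y+Z) \neq 0$, so $\mathbb{E}_U(\mathrm{Tr}(UM))$ need not vanish for it. Your own proof makes the true requirement visible: you need the ensemble to be invariant under a global phase (or some other left translation without fixed nonzero vectors), which Haar has but a generic 1-design does not. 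The paper flags exactly this point immediately after the lemma, noting that the Haar assumption in Lemma~\ref{lemma:lemma2} is ``stringent'' and that relaxing it to other ensembles is open, so the remark should be dropped or corrected rather than presented as a free generalization.
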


\begin{proof}
    Since $M$ is Hermitian, it can be diagonalized. That is, a unitary $D$ exists for which $M=D\Lambda D^\dagger$ where $\Lambda$ is diagonal. Then, using the right and left invariances of the Haar measure 
    \begin{align*}
        \mathbb{E}_U(\mathrm{Tr}\left(UM\right)) &= \mathrm{Tr}\left( \int_U dU U D \Lambda D^\dagger\right)\\
        &=\mathrm{Tr}\left( \int_U dU U \Lambda D^\dagger\right)\\
        &= \mathrm{Tr}\left( \int_U dU D^\dagger U \Lambda D^\dagger\right) \\
        &= \mathrm{Tr}\left( \int_U dU  U \Lambda D^\dagger\right) \\
        &= \sum_{i=1}^d\int_{U}dU U_{ii}\Lambda_i.
    \end{align*}
    By symmetry, each of the terms in this last equation should vanish since if they didn't, then the eigenvector associated with the eigenvalue $\Lambda_i$ would be a special direction for the Haar random unitary. 
\end{proof}

Using these lemmas, we can now make a substantive statement about local surjectivity for AQML ans\"atze. Notice that testing local surjectivity amounts to calculating the following quantity

\begin{equation}
    \mathbb{E}_{\bs{\theta}}\left(\mathrm{Tr}\left(P\partial_{\bs{\theta} U}\right)\right) = -i\int_{0}^Td\tau g_k(t) \mathbb{E}_{\bs{\theta}}\left(PU(T;\bs{\theta})U^{\dagger}(\tau; \bs{\theta})hU(\tau; \bs{\theta})\right).
\end{equation}
Here, we make a few extra assumptions. We assume that $U(\tau)$ and $U(T)$ are independent for $\tau<T$, that $U(\tau)$ is a 1-design for $\tau>0$, and that $U(T)$ is a Haar random unitary. These assumptions imply that $U(T)$ is also a 1-design. For $0<\tau<T$, the integrand vanishes due to Lemma~\ref{lemma:lemma1}. The edge case $\tau=0$ give us an integrand containing $\mathbb{E}_{\bs{\theta}}(\mathrm{Tr}(U(\bs{\theta}) h P) )$. The edge case $\tau=T$ gives an integrand containing $\mathbb{E}_{\bs{\theta}}(\mathrm{Tr}(U(\bs{\theta}) P h) )$. Since Paulis are closed under multiplications, both cases have integrands of the form of Lemma~\ref{lemma:lemma2} and, therefore, vanish. \par 

We can, therefore, conclude that local surjectivity will be violated for every Pauli string if an AQML algorithm such that $U(\bs{\theta})$ is a Haar random unitary. Notice that this statement also assumes that $U(\tau)$ and $U(T)$ are independent for $\tau<T$, that $U(\tau)$ is a 1-design for $\tau>0$. \par 

Here, we must clarify that one could also relax the Haar random requirement if $g_k(t)$ vanishes for $t\approx T$ and $t\approx 0$. In that case, an AQML algorithm producing unitaries that only resemble 1-designs suffices for the violation of local surjectivity. \par 

Another caveat to our proof is that the Haar random requirement of Lemma~\ref{lemma:lemma2} is stringent. It remains to be shown whether the average $\mathrm{Tr}(UM)$ can vanish for other kinds of unitary ensembles. \par 

\section{Analog Landscapes}\label{A:Traps}

We numerically simulated the time evolution of a 1D Ising with a transverse field defined by  
\begin{equation}
    U^{\text{targ}} = \exp\left\{-i\left(\sum_{i=1}^{N-1}Z_iZ_{i+1}+h\sum_{i=1}^N X_i\right)\right\}.
\end{equation}
with $h=0.1$. We approximated this evolution with $N=2,3,4$, and $5$ qubits using the loss in Eq.~(\ref{eq:LossE}). 

The simulation code was implemented with PennyLane's differentiable pulse programming capabilities~\cite{bergholm2018pennylane} and the JAX framework. The optimizers were similarly implemented using the Optax library. We note that these libraries/frameworks support automatic differentiation; thus, the results are exact up to machine precision.

The first phase of the numerical experiments consists of training $100$ randomly initialized A1 QP ansatz for each qubit number. These are sampled from a uniform distribution on the interval $[0,1]$. We parameterize the pulses as a sum of the first $K$ terms of a Fourier series, where $K = 5N$ for $N$ qubits. Thus, concretely, we used $10,15,20$, and $25$ terms for each control pulse for the experiments with $2,3,4$ and $5$ pulses, respectively. We note that the $K=5N$ cutoff for the series is arbitrary, but the system's behavior remained unchanged for several different $K$. As for the optimizer, we used Optax's Adam implementation.

\begin{figure}[h]
    \centering
    \includegraphics[width=0.8\linewidth]{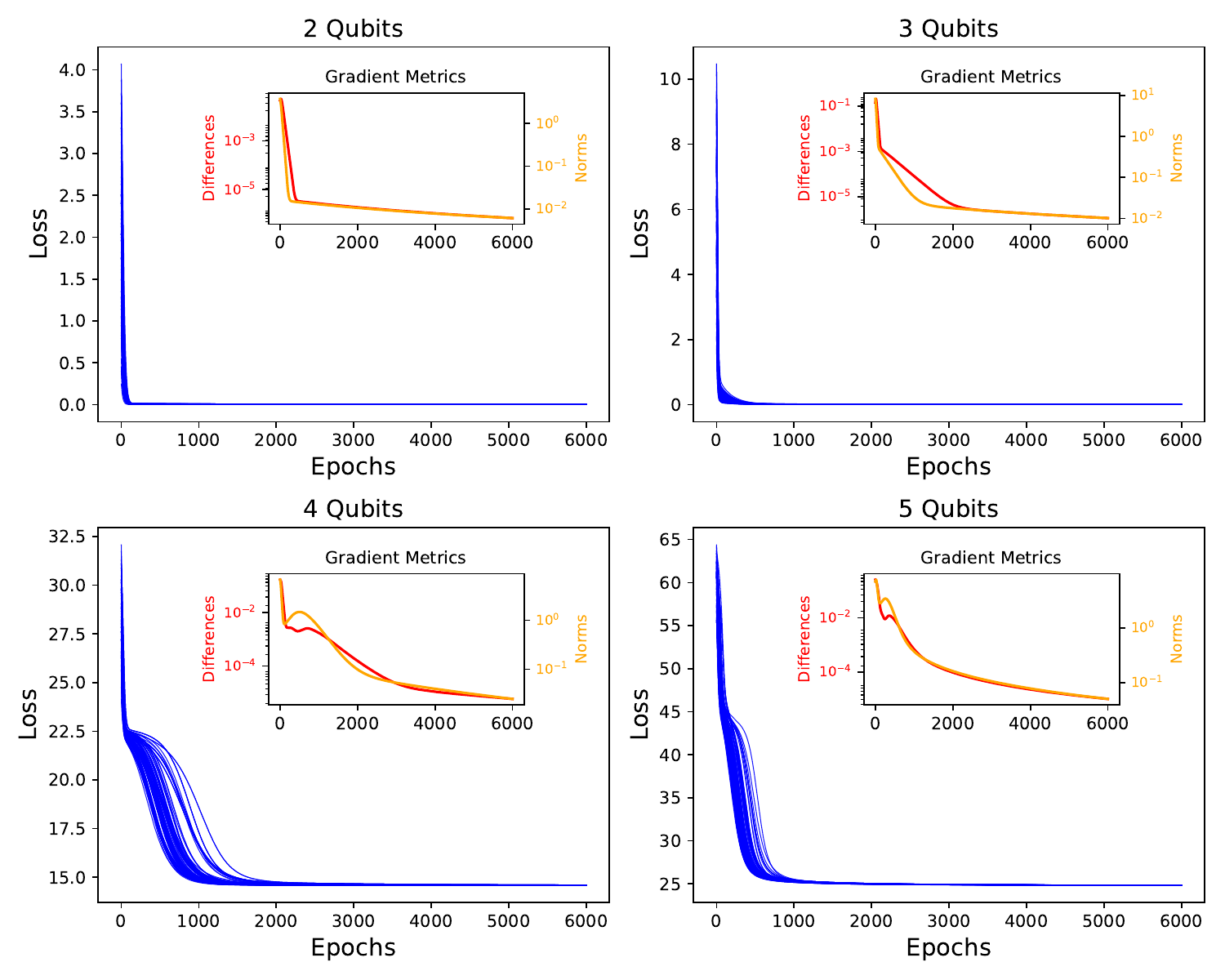}
    \caption{Training history of $100$ random A1 QP ansatz on $2,3,4$ and $5$ qubits. The QPs were trained for $6000$ epochs and we observe that for each trial $\mathcal{L}_E$ plateaus, suggesting that the training converged to a critical point. For $2$ and $3$ qubits, all trials converge to the optimal value of $\mathcal{L}_E = 0$, yet for $4$ and $5$ this is no longer the case and they all converge to sub-optimal values. The inset shows logarithmic plots of two different gradient metrics used to confirm convergence: the $L^2$ norm of the gradient and the $L^2$ norm of the differences between gradients at consecutive epochs. These metrics further confirm convergence to a critical point.}
    \label{fig:QP_Training_History_all}
\end{figure}

Fig.~\ref{fig:QP_Training_History_all} shows the results of the numerical experiments. The training process is successful for all cases and converges to a critical point. Yet the optimal approximated unitary is only achieved at a value of $\mathcal{L}_E = 0$. Thus we see that for $4$ and $5$ qubits, all sampled QP ansatz converge to a sub-optimal critical point (i.e., a local minimum or maximum instead of the global optimum). We see similar results with the A2 QP ansatz with a lower loss for $N\geq 4$.
 
To confirm that the training had indeed converged, we focus on two additional metrics: the mean $L^{2}$ norm of the gradient at each point Eq.~(\ref{eq:average_norm_gradients_app}) and the mean $L^{2}$ norm of the differences between the gradients at epoch $i$ and epoch $i-1$ Eq.~(\ref{eq:average_norm_gradients_differences_app}):

\begin{align}
&\frac{1}{N_{\text{trials}}}\sum_{j=1}^{N_{\text{trials}}} \left\Vert \nabla \mathcal{L}_E^{(i)} \right\Vert^2 \label{eq:average_norm_gradients_app} \\
&\frac{1}{N_{\text{trials}}}\sum_{j=1}^{N_{\text{trials}}} \left\Vert \nabla \mathcal{L}_E^{
(i)} - \nabla \mathcal{L}_E^{(i-1)} \right\Vert^2
\label{eq:average_norm_gradients_differences_app}.
\end{align}

To make sense of these metrics, recall that at a critical point of $\mathcal{L}_E$, its gradient $\nabla \mathcal{L}_E = 0$ by definition. Thus, the mean $L^{2}$ norm of the gradient will only be zero at a critical point, and we would expect the gradients at that epoch and onwards to remain very small and close to $0$. Similarly, because the gradients are zero at a critical point, the update rule of the optimizer at the critical point wouldn't change the current parameters. Thus, the difference between the gradients at epoch $i$ and $i-1$ at a critical point should remain small and close to $0$.

We see that for all experiments, both metrics keep decreasing and plateauing by the end of the training. For $2$ and $3$ qubits, these plateau occurs at approximately $10^{-5}$, whereas for $4$ and $5$ qubits it is located at around $10^{-4}$. 

The second phase of the numerical experiments consists of calculating the Hessian matrix at the critical points found during training. Recall that the Hessian matrix encodes information about the second derivatives of the function at a given point and thus encodes information about the local curvature of the landscape around that point. In particular, a positive eigenvalue corresponds to a direction (the direction of its corresponding eigenvector) with positive curvature, and conversely, a negative eigenvalue corresponds to a direction with negative curvature. A zero eigenvalue represents a direction with zero curvature (i.e., flat).

\begin{figure}[h]
    \centering
    \includegraphics[width=0.8\linewidth]{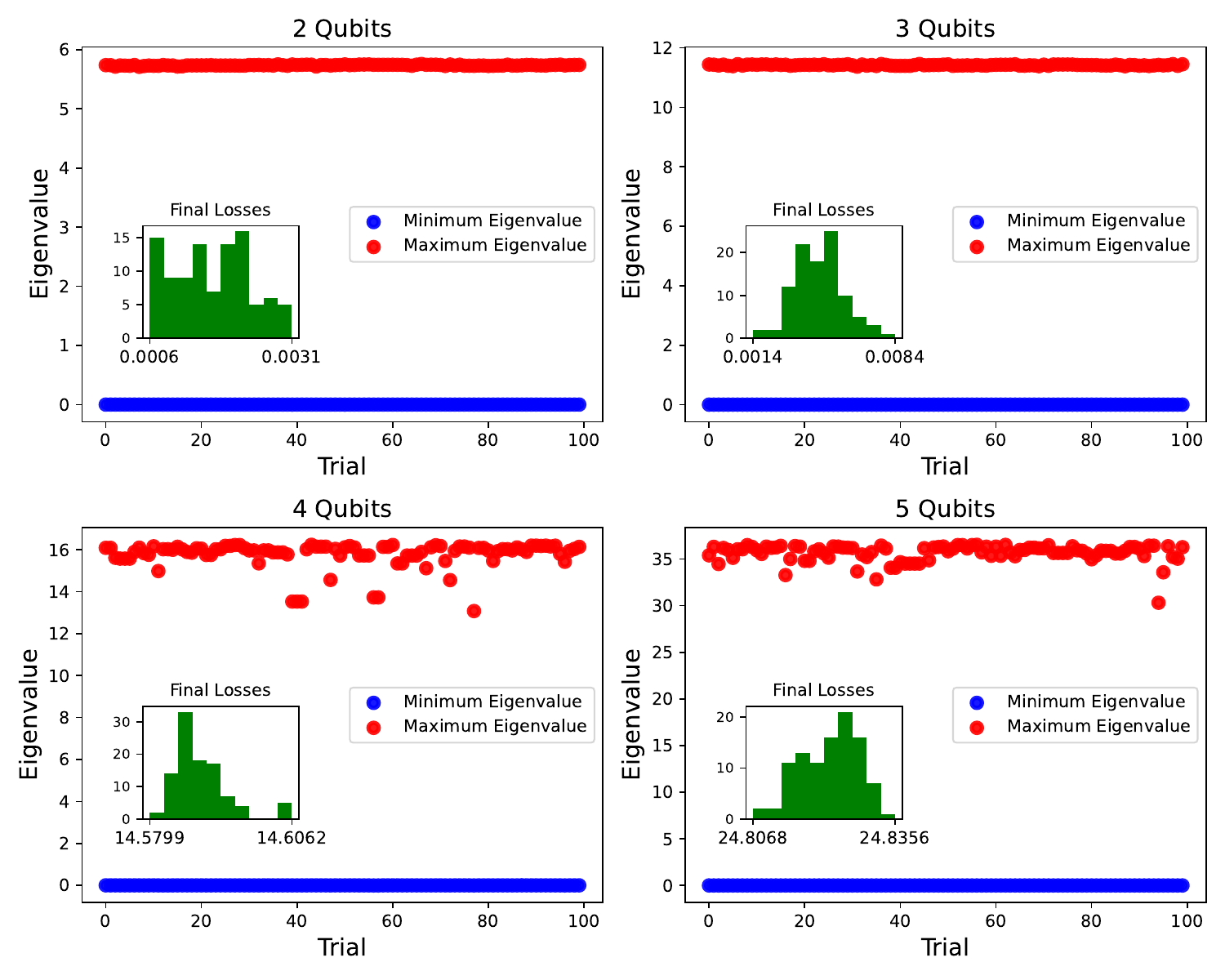}
    \caption{Minimum and Maximum eigenvalues of the Hessian matrix at the converged critical points for each trial of an A1 QP. Since the minimum eigenvalue is always $0$, every trial converged to a critical point with a positive semidefinite Hessian. The presence of a $0$ eigenvalue indicates that in some direction around the critical point, the landscape is locally "flat." The insets show a histogram of the final losses at each critical point.}
    \label{fig:QP_Hessian_Eigenvalues_all}
\end{figure}

Fig.~\ref{fig:QP_Hessian_Eigenvalues_all} shows the minimum and maximum eigenvalues of the Hessian matrices after the training process converges to a critical point for each trial of A1 QPs. The insets show that for $2$ and $3$ qubits, the final loss is indeed optimal. Yet even for this optimal case. We observe that there is always a $0$ eigenvalue corresponding to some locally "flat" direction around the critical points. That is, even for qubit sizes that can converge to an optimal value, the landscape is not as well-behaved as the theory predicts. This suggests that the ability to converge to the global minimum for the A1 QP and similar AQML ans\"atze is not a consequence of a trap-free landscape. We note that for A2 QPs, the eigenvalues are both positive and negative (see Fig.~\ref{fig:a1_a2_hessians}), and so those landscapes are trap-free. 

\section{Algorithm-Task Co-Design}\label{A:Codesign}
\subsection{Magnus Expansion Analysis}
Every unitary $U$ could be written as the evolution under a Hamiltonian $H_U$ via $U = \exp\left(-i T H_U\right)$. Formally, $H_U = \frac{i}{T}\log(U)$. On the other hand, a time-dependent Hamiltonian generates evolution as in Eq.~(\ref{eq:FormalUnitary}). This evolution could be written as $\exp\left(-iT H_{\text{eff}}\right)$ where $H_{\text{eff}}$ is an effective Hamiltonian. \par 
According to the average Hamiltonian theory \cite{brinkmann2016introduction, choi2020robust}, the effective Hamiltonian is given by the Magnus-expansion
\begin{align}
H_{\text{eff}} &= \sum_{l=0} H^{(l)} \\
H^{(0)} &= \frac{1}{T}\int_0^T dt_0 H(t_0)\\ 
H^{(1)} &= \frac{-i}{T}\int_0^T \int_0^{t_1} dt_1dt_0 [H(t_1),H(t_0)]\\ 
H^{(l)} &= \frac{(-i)^l}{l!T}\int_{0}^Tdt_{l}\hdots\int_{0}^{t_1}dt_0 [H(t_l),\hdots [H(t_1), H(t_0))]\hdots].
\end{align}
In the case that $T||H(t)||_S\ll 1$ (where $||A||_S$ is the spectral norm of an operator), it is possible to truncate the expansion since the $l^{th}$ term is of order $\mathcal{O}(T^l||H(t)||_S^l).$ However, all orders are relevant to the cases presented in the main text. \par 
Note that for a constant Hamiltonian, all terms with $l>0$ cancel since $[H(t_1), H(t_2)] = 0.$\par 
The Hamiltonians generated by the Magnus expansion depend on the native and control Hamiltonians. We will use $\mathcal{S}_{H}^l$ to denote the operators generated by the nested commutators in $H^{(l)}$. This means that $H^{(l)}$ is a linear combination of the operators in $\mathcal{S}_{H}^l$ (i.e.$H^{(l)} = \sum_{\mathcal{O}\in \mathcal{S}_{H}^l}\alpha_{\mathcal{O}}\mathcal{O}$). It is worth noting that different orders may produce the same operators. That is, in general, $\mathcal{S}_{H}^l \cap \mathcal{S}_{H}^{m}\neq \emptyset$. An example here is instructive. Consider $H_{\text{nat}} = J Z_1Z_2$ with $x$-controls $H_c(t) = \sum_i f_i^x(t)X_i$. Clearly, $S_H^0$ contains $Z_1Z_2$. We can obtain this operator by commuting $Z_1Z_2$ with the operator $X_i$ twice or with $X_1$ once and then with $X_2$ (or vice-versa). Thus, $Z_1Z_2\in \mathcal{S}_H^{2}$ as well. \par 
Theoretically, one can approximate every unitary evolution $U$ for which $H_U$ is in the span of $\mathcal{S}_H = \cap_l \mathcal{S}_H^l$. In practice, however, the Hamiltonians generated by the first few terms of the Magnus expansion may be easier to optimize and find. \par 
The coefficients in front of each generator are a parametric function. For example, going back to our example, up to second order, 
\begin{align}
    \alpha_{zz} &= J  -\frac{J}{2T}\int_0^T dt_2 \int_0^{t_2}dt_1 \left(f^x_1(t_2)f_1^x(t_1)+ f^x_2(t_2)f_2^x(t_1)+f^x_2(t_2)f_1^x(t_1)+f^x_1(t_2)f_2^x(t_1)\right)t_1.
\end{align}

Notice that we have not simplified the last two terms since the order of integration matters. To see this, suppose the controls are polynomials of orders two and three. Then, note that 
\begin{align}
    \int_0^Tdt_2\int_{0}^{t_2}dt_1 f_2^x(t_2)f_1^x(t_1)t_1 = \int_0^Tdt_2\int_{0}^{t_2}dt_1 t_2^3 t_1^3 = T^7/28, \notag \\
    \int_0^Tdt_2\int_{0}^{t_2}dt_1 f_1^x(t_2)f_2^x(t_1)t_1 = \int_0^Tdt_2\int_{0}^{t_2}dt_1 t_2^2 t_1^4 = T^7/35.\label{eq:Example_Integration}
\end{align}
In this case, the contributions of the coefficients are distinct but linearly dependent. \par 

One can follow the following procedure to construct the terms in $H^{(l)}$ and its coefficients. We will exemplify the procedure on the Ising ansatz with global control fields: 
\begin{align} \label{eq:Ising_again}
    H_{\text{nat}} = J \sum_{i}Z_iZ_{i+1}\\
    H_{\text{ctr}}(t) = \sum_{i,\alpha}f_\alpha(t)S^\alpha_i.
\end{align}
\begin{figure}
    \centering
    \includegraphics[width=0.8\linewidth]{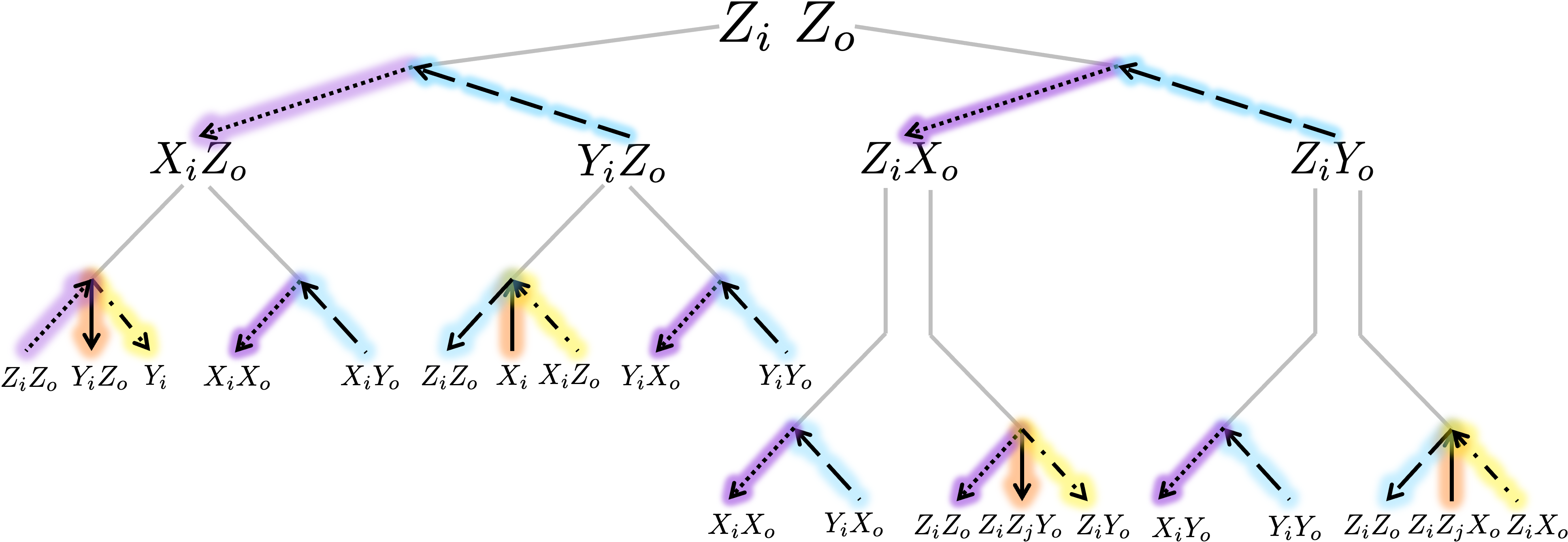}
    \caption{Diagramatic representation of the commutation relationships of the Hamiltonian composed of a nearest-neighbor Ising-type interaction and global controls along the $x,y$ and $z$ directions. Arrows represent directions where coefficients are multiplied by a $+1$ while directions opposite to the arrows acquire a factor of $-1$.}
    \label{fig:Commutators_Appdx}
\end{figure}

\begin{enumerate}
    \item First, we construct a graph that explains how the elements in $H_{\text{nat}}$ and $H_{\text{ctr}}$ transform under commutations. In our example, the operator $X_i$ transforms into $Y_i$ when commuted with $Z_i$, picking up a coefficient of $i f_z$. It also transforms to $Y_iZ_{i+1}$, picking up a coefficient of $iJ$. Fig.~\ref{fig:Commutators_Appdx} exemplifies the transformations generated by terms in our Hamiltonian. The arrows represent commutations that lead to the prescribed coefficient, while going in the opposite direction requires an extra factor of $-1$. 
    \item To construct a term given by $l$ commutation relations, we first grab an operator from the Hamiltonian (for example, $J Z_i Z_{i+1}$) and operate on it using the graph in Step 1 a total of $l$ times to produce one possible operator. We observe that many different operators can be created in this fashion, starting from a given initial operator. We must write the acquired coefficients from right to left at each stage. For time-dependent coefficients, the times acquire growing indices, from 0 to $l$, from right to left. For example, for $l=2$, we may obtain the following operators: 
    \begin{align}
        J Z_i Z_{i+1}&\longrightarrow -if_x(t_1)J Y_iZ_{i+1}\longrightarrow -f_x(t_2)f_x(t_1)J Y_i Y_{i+1} \\
        J Z_i Z_{i+1}&\longrightarrow i f_y(t_1)J Z_iX_{i+1}\longrightarrow -J f_y(t_1)J Z_iY_{i+1}Z_{i+2}.
    \end{align}
    \item We then multiply the coefficients by the constant $(-i)^l/(l!T)$ which weighs in the contribution to the Magnus expansion. Lastly, the coefficients are integrated and added to all other contributions paired with the same operator. For the paths exemplified above in Step 2, we see that $Z_iZ_{i+1}$ we get the coefficients
    \begin{align}
        \frac{1}{2T}\overline{f_xf_xJ} &= \frac{1}{2T}\int_0^T\int_{0}^{t_2}\int_{0}^{t_1}dt_2dt_1dt_0 f_x(t_2)f_x(t_1)J\\
        \frac{1}{2T}\overline{Jf_yJ} &= \frac{1}{2T}\int_0^T\int_{0}^{t_2}\int_{0}^{t_1}dt_2dt_1dt_0 Jf_y(t_1)J
    \end{align}
\end{enumerate}

\begin{table}[h!]
\begin{center}
\begin{tabular}{|c|c|}\hline
    Operator & Coefficient \\ \hline \hline
    $X_i$ & $\overline{f_x}+\overline{[f_z, f_x]}+\frac{1}{2}\overline{J[f_x,J]}+\frac{1}{2}\overline{f_y[f_x,f_y]}+\frac{1}{2}\overline{f_z[f_x,f_z]}$\\ \hline 
    $Y_i$ & $\overline{f_y} +\overline{[f_x, f_z]}+\frac{1}{2}\overline{J[f_y, J]}+\frac{1}{2}\overline{f_x[f_y, f_x]}+\frac{1}{2}\overline{f_z[f_y, f_z]}$ \\ \hline 
    $Z_i$ & $\overline{f_z}+\overline{[f_y, f_x]}+\frac{1}{2}\overline{f_x[f_z, f_x]}+\frac{1}{2}\overline{f_y[f_z,f_y]}$\\\hline
    $X_iX_{i+1}$ & $\frac{1}{2}\overline{f_zf_yJ} +\frac{1}{2}\overline{f_y[f_y,J]}$ \\ \hline 
    $X_iY_{i+1}$ & $-\frac{1}{2}\overline{f_yf_xJ}+\frac{1}{2}\overline{f_x[J,f_y]}$ \\ \hline 
    $X_i Z_{i+1}$ & $\overline{[J,f_y]} +\frac{1}{2}\overline{J[f_x, f_z]}+\frac{1}{2}\overline{f_z[f_x, J]}$ \\ \hline 
    $Y_iX_{i+1}$ & $-\frac{1}{2}\overline{f_xf_yJ}+\frac{1}{2}\overline{f_y[J,f_x]}$\\ \hline
    $Y_iY_{i+1}$ & $\frac{1}{2}\overline{f_xf_xJ} + \frac{1}{2}\overline{f_x[f_x,J]}$ \\ \hline 
    $Y_iZ_{i+1}$ & $\overline{[f_x,J]}+\frac{1}{2}\overline{J[f_y, f_z]}+\frac{1}{2}\overline{f_z[J, f_y]}$ \\ \hline 
    $Z_i X_{i+1}$ & $\frac{1}{2}\overline{f_zf_xJ}-\overline{f_y J}$ \\ \hline
    $Z_iY_{i+1}$  & $\frac{1}{2}\overline{f_zf_yJ}+\overline{f_xJ}$ \\ \hline 
    $Z_i Z_{i+1}$ & $\overline{J} +\frac{1}{2}\overline{f_x[J,f_x]}+\frac{1}{2}\overline{f_y[J,f_y]}+\frac{1}{2}\overline{(f_yf_y-f_xf_x)J}$ \\\hline
    $Z_iY_{i+1}Z_{i+2}$ & $ \frac{1}{2}\overline{Jf_yJ}$\\ \hline 
    $Z_iX_{i+1}Z_{i+2}$ & $ \frac{1}{2}\overline{Jf_xJ}$ \\\hline
\end{tabular}
\caption{Operators and coefficients of the Ising ansatz with global fields appearing on the Magnus expansion up to second order.}
\label{table:IsingGlobal}
\end{center}
\end{table}

\begin{figure}
    \centering
    \includegraphics[width=0.75\linewidth]{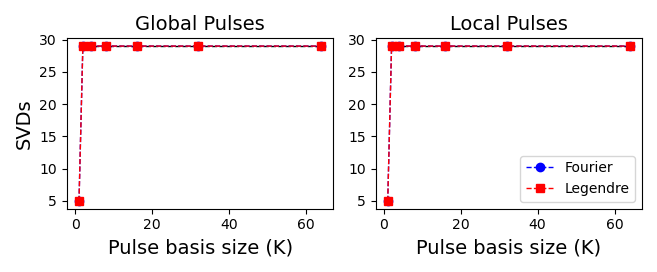}
    \caption{SVD analysis of the coefficients of the operators of the Magnus expansion up to second order. We chose 500 random instances of variational parameters and calculated all 29 coefficients of an Ising ansatz with global and local pulses for $N=3$. We plot the singular values of the matrix obtained by this procedure for different numbers of basis functions $K$ and for different functions (Fourier and Legendre). We find that whenever $K>1$, the coefficients are linearly independent. Similar results were observed for $N=4,5,6$.}
    \label{fig:SVDs}
\end{figure}

With this in mind, here are the coefficients of all the operators of the Magnus expansion up to $l=2$ for the model in Eq.~\ref{eq:Ising_again} as shown in Table \ref{table:IsingGlobal}, where we have used the notation $\overline{[A,B]} = \overline{AB}-\overline{BA}$. Notice that if the order of integration didn't matter, then $X_iZ_{i+1}$ and $Y_iZ_{i+1}$ would have a zero coefficient, and the coefficients of $X_iX_{i+1}$ and $X_iY_{i+1}$ would be the same (up to a sign). Therefore, if the order of integration didn't matter, the Magnus expansion would lead to linearly dependent coefficients, so individual terms couldn't be tuned independently. \par 
However, even when the order of integration matters, the analysis in Table~\ref{table:IsingGlobal} does not suffice to show that the coefficients are linearly independent. To do so, we see the right-most column of Table~\ref{table:IsingGlobal} as a map $\phi$ that takes in variational parameters $\bs{\theta}$ and produces the coefficients of the operators. In a sense, $\phi$ is a featured vector whose entries $\phi_{O}$ are the coefficients of the operator $O$. We can produce a data matrix $D$ whose $M$ columns correspond of choosing $M$ random instantiating of the variational parameters $\left\{\bs{\theta}_{i}\right\}_{i=1}^M$ and calculating the associated coefficient vectors $\left\{\phi(\bs{\theta}_i)\right\}_{i=1}^M$. We can then analyze the singular value decomposition of $D$ and count the number of nonzero singular values. Suppose the dimension of $\phi$ is $d$ and the number of nonzero singular values is $s$; if $s<d$, then the coefficients are linearly dependent. \par 
Fig.~\ref{fig:SVDs} shows the SVD analysis of the coefficients of the operators of the Magnus expansion up to the second order. We chose 500 random instances of variational parameters and calculated all 29 coefficients of an Ising ansatz with global and local pulses for $N=3$. We plot the singular values of the matrix obtained by this procedure for different numbers of basis functions $K$ and for different functions (Fourier and Legendre). We find that whenever $K>1$, the coefficients are linearly independent. Similar results were observed for $N=4,5,6$.\par 
\subsection{Spin-Squeezing using QPs}\label{A:Squeezing}
Suppose that all control fields are zero in a QP except $f_N^x$. Then, up to second order, the effective Hamiltonian has the following contributions: 
\begin{align*}
    H^{(0)} &= J \sum_i Z_iZ_N +F_N^x/TX_N \\
    H^{(1)} &= 0 \\ 
    H^{(2)} &= \frac{J}{2T}\sum_{i=1}^{N-1}\int_0^Tdt_2\int_0^{t_2}dt_1\int_0^{t_1}dt_0\left(f_N^x(t_2)f_N^x(t_1) - f_N^x(t_2)f_N^x(t_0)\right)Z_iZ_N \\
    &+\alpha_{ZZX}\sum_{ij}Z_iZ_jX_N.
\end{align*}
Thus, we have the effective Hamiltonian is 
\begin{equation}
    H_{\text{eff}} = \alpha_{ZZ}\sum_i Z_iZ_N + F_N^x/TX_N+\alpha_{ZZX}\sum_{ij}Z_iZ_jX_N.
\end{equation}
Let us analyze the case when $\alpha_{ZZ}=0$. Notice that 
\begin{equation*}
    \left(S_{\text{in}}^z\right)^2 = \left(\sum_{i=1}^{N-1} Z_i\right)\left(\sum_{j=1}^{N-1}  Z_j\right) = \sum_{ij}Z_iZ_j.
\end{equation*}
Thus, the Hamiltonian is that of Eq.~\ref{eq:Squeezing}. Notice that this hinges on $\alpha_{ZZ}=0$, which is thus a condition on the control fields we choose. 

\subsection{Additional Data on Simulating Jordan-Wigner Products}

\begin{figure}
    \centering
    \includegraphics[width=0.75\linewidth]{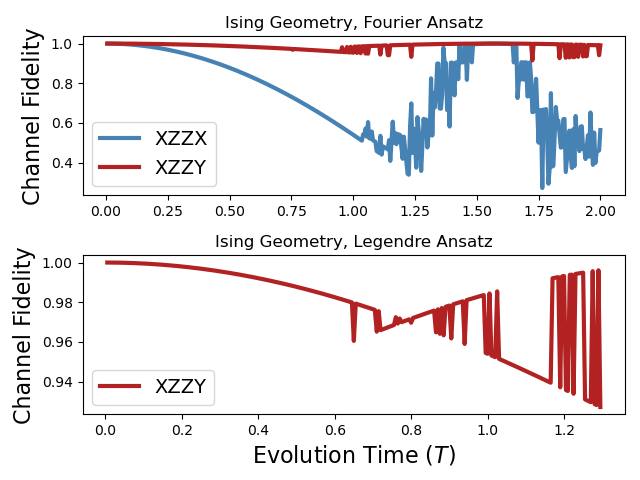}
    \caption{Results of simulating the evolution of Jordan-Wigner products for longer times and using the Fourier (top panel) and Legendre (bottom panel) ansatze}
    \label{fig:JWProds_Extended}
\end{figure}

In the main text, we showed the result of simulating the evolution of Jordan-Wigner products for small times and for the Fourier ansatz. Fig.~\ref{fig:JWProds_Extended} shows the results of simulating the evolution of Jordan-Wigner products for longer times and using the Fourier (top panel) and Legendre (bottom panel) ansatz. We observe that for both of these ansatz, the fidelity decreases as time increases but returns to maximum later. The time of this comeback differs from one to another anzats. The Fourier ansatz has a repetition property after $T=1$, but the Lengendre ansatz doesn't.

\end{document}